\newtheoremstyle{red}{}{}{\itshape}{}{\color{red!80!black}\bfseries}{.}{ }{}
\definecolor{darkred}{rgb}{0.57,0,0.12}
\let\nc\newcommand
\DeclareMathOperator{\Tr}{Tr}
\renewcommand{\H}{\mathcal{H}}
\newcommand{\R}{\mathcal{R}}
\newcommand{\T}{\mathcal{T}}
\newcommand{\B}{\mathcal{B}}
\newcommand{\E}{\mathcal{E}}
\newcommand{\F}{\mathcal{F}}
\renewcommand{\L}{\mathcal{L}}
\renewcommand{\P}{\mathcal{P}}
\newcommand{\M}{\mathcal{M}}
\newcommand{\N}{\mathcal{N}}
\newcommand{\Q}{\mathcal{Q}}
\let\ox\otimes
\renewcommand{\*}{\textup{*}}
\newcommand{\<}{\left\langle}
\renewcommand{\>}{\right\rangle}
\renewcommand{\bar}{\;\rule{0pt}{9.5pt}\right|\;}
\let\sbar\bar
\newcommand{\lset}{\left\{\left.}
\newcommand{\rset}{\right\}}
\newcommand{\RR}{\mathbb{R}}
\renewcommand{\SS}{\mathbb{S}}
\newcommand{\OO}{\mathbb{O}}
\newcommand{\FF}{\mathbb{F}}
\newcommand{\mfR}{\mathfrak{R}}
\newcommand{\mfC}{\mathfrak{C}}
\newcommand{\mfM}{\mathfrak{M}}
\newcommand{\id}{\mathbbm{1}}
\newcommand{\idc}{\mathrm{id}}
\newenvironment{boxxed}[1]%
  {\expandafter\ifstrequal\expandafter{#1}{red}{\begin{tcolorbox}[colback=orange!3,colframe=orange!15]}{\begin{tcolorbox}[colback=white,colframe=gray!10,breakable,enhanced]}}%
  {\end{tcolorbox}}
\let\textsc\uppercase
\newtheorem{theorem}{Theorem}
\newtheorem{corollary}[theorem]{Corollary}
\newtheorem{lemma}[theorem]{Lemma}
\theoremstyle{definition}
\newtheorem*{remark}{Remark}
\theoremstyle{red}
\let\oldproofname\proofname
\renewcommand{\proofname}{\rm\bf{\oldproofname}}
  \nc{\MIO}{{\text{\rm MIO}}}
\nc{\DIO}{{\text{\rm DIO}}}
\nc{\SIO}{{\text{\rm SIO}}}
\nc{\IO}{{\text{\rm IO}}}
\nc{\CRNG}{{\text{\rm CRNG}}}
\nc{\lsetr}{\left\{\,}
\nc{\rsetr}{\right.\right\}}
\nc{\barr}{\,\rule{0pt}{9.5pt}\left|\;}
\nc{\ketbra}[2]{\ket{#1}\!\bra{#2}}
\nc{\TT}{\mathbb{T}}
\nc{\CT}{\Upsilon}
\nc{\wt}{\widetilde}
\nc{\dia}{\!\!\Diamond}
\nc{\bdia}{\!\!\vardiamond}
\DeclareMathOperator{\aff}{aff}
\newcommand{\RFg}{R_{\max,\FF}}
\newcommand{\RFs}{R_{s,\FF}}
\newcommand{\RFmin}{R_{\min,\FF}}
\newcommand{\ROg}{R_{\max,\OO}}
\newcommand{\ROs}{R_{s,\OO}}
\newcommand{\ROmin}{R_{\min,\OO}}
\newcommand{\bal}{\begin{equation}\begin{aligned}}
\newcommand{\eal}{\end{aligned}\end{equation}}
\newcommand{\dm}[1]{\ketbra{#1}{#1}}
\def\renewtheorem#1{%
  \expandafter\let\csname#1\endcsname\relax
  \expandafter\let\csname c@#1\endcsname\relax
  \gdef\renewtheorem@envname{#1}
  \renewtheorem@secpar
}
\def\renewtheorem@secpar{\@ifnextchar[{\renewtheorem@numberedlike}{\renewtheorem@nonumberedlike}}
\def\renewtheorem@numberedlike[#1]#2{\newtheorem{\renewtheorem@envname}[#1]{#2}}
\def\renewtheorem@nonumberedlike#1{  
\def\renewtheorem@caption{#1}
\edef\renewtheorem@nowithin{\noexpand\newtheorem{\renewtheorem@envname}{\renewtheorem@caption}}
\renewtheorem@thirdpar
}
\def\renewtheorem@thirdpar{\@ifnextchar[{\renewtheorem@within}{\renewtheorem@nowithin}}
\def\renewtheorem@within[#1]{\renewtheorem@nowithin[#1]}
\nc{\phig}{{\phi_{\rm gold}}}
\begin{document}


 \title{One-Shot Manipulation of Dynamical Quantum Resources}

 \author{Bartosz Regula}
\thanks{Both authors contributed equally to this work.}
\affiliation{School of Physical and Mathematical Sciences, Nanyang Technological University, 637371, Singapore}
\email{bartosz.regula@gmail.com}
\author{Ryuji Takagi}
\thanks{Both authors contributed equally to this work.}
\affiliation{School of Physical and Mathematical Sciences, Nanyang Technological University, 637371, Singapore}
\affiliation{Center for Theoretical Physics and Department of Physics, Massachusetts Institute of Technology, Cambridge, Massachusetts 02139, USA}
\email{ryuji.takagi@ntu.edu.sg}

\begin{abstract}
We develop a unified framework to characterize one-shot transformations of dynamical quantum resources in terms of resource quantifiers, establishing universal conditions for exact and approximate transformations in general resource theories. Our framework encompasses all dynamical resources represented as quantum channels, including those with a specific structure --- such as boxes, assemblages, and measurements --- thus immediately applying in a vast range of physical settings. For the particularly important manipulation tasks of distillation and dilution, we show that our conditions become necessary and sufficient for broad classes of important theories, enabling an exact characterization of these tasks and establishing a precise connection between operational problems and resource monotones based on entropic divergences. We exemplify our results by considering explicit applications to: quantum communication, where we obtain exact expressions for one-shot quantum capacity and simulation cost assisted by no-signalling, separability-preserving, and positive partial transpose--preserving codes; as well as to nonlocality, contextuality, and measurement incompatibility, where we present operational applications of a number of relevant resource measures.
\end{abstract}

\maketitle


\textit{\textbf{Introduction.}}
---
Manipulating different resources under physical restrictions underpins many quantum technologies, and the precise understanding of physically realizable transformations is a fundamental question both from theoretical and practical points of view. 
Quantum resource theories~\cite{horodecki_2012,coecke_2016,chitambar_2019} provide a platform through which the quantification and manipulation of resources can be explicitly considered, enabling the study of resource manipulation in a range of settings of interest~\cite{horodecki_2009,streltsov_2017,gour_2008,brandao_2015}.  

Due to the inherent generality of the framework of resource theories, the formalism could be expected to provide unifying results on resource manipulation that hold for diverse classes of resources.
Developing such versatile \textit{general resource theories} allows one to extract common features shared by different physical phenomena and clarify the peculiarities that differentiate one setting from others~\cite{brandao_2015,liu_2017,gour_2017,anshu_2018-1,regula_2018,lami_2018,takagi_2019-2,takagi_2019,uola_2019-1,liu_2019,Takagi2020universal,regula_2020,fang_2020,Zhou2020general,Vijayan2020oneshot,regula_2020-2,fang_2020-2,kuroiwa_2020,Regula2021operational}. 
However, previous approaches to these problems suffer from a number of limitations. 
Many works have focused on the static, rather than dynamical resources, in the sense that only the manipulation of quantum states was considered, and a more general approach that incorporates the ability to manipulate the dynamics of the systems was not established. Recent works have begun to describe channel-based theories~\cite{takagi_2019,liu_2020,liu_2019-1,gour_2019-1,regula_2020-2,fang_2020-2,takagi_2020,takagi2020optimal}, but often focused on the investigation of specific theories such as entanglement, coherence, or quantum memories~\cite{bendana_2017,diaz_2018-2,rosset_2018,bauml_2019,Gour2020dynamical,saxena_2020,pirandola_2017,faist_2018,theurer_2019,Theure2020dynamical,yuan_2020,wang_2019-1,Wang2019distinguish_channels}, or obtained results that only apply in the idealized asymptotic limit~\cite{liu_2020}.
The characterization of dynamical resource theories is significantly more complex than the manipulation of underlying states, and many questions remain unanswered. Notably, the precise characterization of convertibility between two quantum channels in the practical \textit{one-shot} scenario has been an outstanding problem to be addressed. 

An important aspect of dynamical resources is that they can describe a much broader range of settings than the commonly considered manipulation of quantum states with general quantum channels. 
For instance, Bell nonlocality~\cite{bell1964einstein,Brunner2014Bell} as well as quantum contextuality~\cite{KOCHEN1967,Cabello2014graph} have been investigated within formal resource-theoretic settings~\cite{Geller_2014,de_Vicente_2014_nonlocality,Wolfe2020quantifyingbell,Horodecki2015axiomatic,Duarte2018contextuality}, but the specific restrictions on the structure of channels allowed in these settings prevented them from being integrated into most of the previous general quantum resource frameworks.  
The recent works of Refs.~\cite{Schmid2020typeindependent,Rosset2020type} considered an approach to quantum nonlocality that encompasses any type of input and output systems, suggesting that the establishment of a broader framework of channel resource theories might be possible, and opening up the potential to address the manipulation of dynamical resources in a unified manner. 

Here, we achieve such a unified description by providing a universal characterization of one-shot resource transformations with finite error in terms of fundamental resource measures, valid in general resource theories of quantum channels. 
By allowing arbitrary restrictions on the set of channels under consideration, our results apply not only to settings previously studied in resource theories of channels --- e.g., entanglement, coherence, magic, quantum thermodynamics, and quantum communication --- but also to other dynamical resources such as Bell nonlocality, contextuality, steering~\cite{Gallego2015steering,Uola2020steering}, measurement incompatibility~\cite{Heinosaari2016invitation}, and many others~\cite{Schmid2020typeindependent, Rosset2020type}, offering a very general quantitative description of the fundamental task of resource manipulation.
As an important subclass of manipulation tasks, we study resource distillation and dilution, where we present additional bounds and results that provide necessary and sufficient conditions for resource conversion in many relevant cases. This provides an exact characterization of optimal one-shot rates achievable in these tasks, in which case important resource measures --- such as resource robustness and hypothesis testing relative entropy --- are endowed with explicit operational meaning.
We further establish tight benchmarks on the achievable fidelity of distillation.
Finally, we discuss insights provided by our general results into several physical scenarios.
We first apply our results to quantum communication and find exact expressions for quantum capacity and simulation cost for communication assisted by no-signalling codes and codes preserving separability or the positivity of the partial transpose (PPT).
We then discuss the application to nonlocality, contextuality, and measurement incompatibility, where we link resource measures previously introduced in other settings to approximate resource transformations.

We focus on discussing our main results below, and the technical proofs are deferred to the Supplemental Material~\footnote{See the Supplemental Material for detailed proofs of our main results, which includes Refs.~\cite{sion_1958,fuchs_1999,kretschmann_2004,johnston_2018,brandao_2010,brandao_2011}}.

\nocite{sion_1958,fuchs_1999,kretschmann_2004,johnston_2018,brandao_2010,brandao_2011}

\textit{\textbf{Manipulation of dynamical resources.}}
---
Let $\OO_{\rm all}$ be the set of valid channels allowed in the given physical setting; this can be the set of all quantum channels or a subset thereof, allowing us to take into consideration possible restrictions on the types of channels being manipulated.
Each resource theory also designates a subset of channels that are considered to be available for free, and we denote the given free channels as $\OO\subseteq\OO_{\rm all}$.
We impose mild assumptions that the underlying Hilbert spaces are always finite dimensional and, for a fixed dimension, the set of free channels $\OO$ is convex and closed~\cite{brandao_2015}. 

General transformations of quantum channels are described by quantum superchannels~\cite{chiribella_2008}.
Since superchannels need not preserve specific channel structures in general~\cite{Schmid2020typeindependent}, we consider the subset of superchannels that map the set of allowed channels $\OO_{\rm all}$ to allowed output channels $\OO_{\rm all}'$, defined as $\SS_{\rm all}\coloneqq\{\Theta:\OO_{\rm all}\rightarrow\OO_{\rm all}'\}$. 
We can now take a subset of $\OO_{\rm all}'$ and consider it as the free channels in the output space, which we denote by $\OO'$.
A subset of $\SS_{\rm all}$ serves as the set of free transformation that can be used for the manipulation of resources. 
The standard requirement for any free operation is that it should not generate any resource, i.e., it should not create any costly channel out of a free one.
We consider the maximal set satisfying this condition, $\SS\coloneqq\lset\Theta\in\SS_{\rm all}\sbar\Theta(\M)\in\OO'\;\forall \M\in\OO\rset$.  

Our goal is to find the conditions for the transformation from $\E$ to $\N$ using free superchannels in $\SS$, given any two channels $\E\in\OO_{\rm all}$ and $\N\in\OO_{\rm all}'$.
In practice, the transformation can often only be achieved approximately, especially in nonasymptotic resource manipulation.
To evaluate the inevitable error, we consider the worst-case fidelity~\cite{belavkin_2005,gilchrist_2005} defined for two channels $\E_1,\E_2\in\OO_{\rm all}$ as $F(\E_1,\E_2)\coloneqq\min_{\rho}F(\idc\otimes\E_1(\rho),\idc\otimes\E_2(\rho))$ where $F(\rho,\sigma)\coloneqq \|\sqrt{\rho}\sqrt{\vphantom{\rho}\sigma}\|_1^2$ is the fidelity.

Our aim will be to characterize the conditions on resource transformation through the resource contents of the given channels. To this end, we introduce two types of resource measures. They both belong to, or are closely related to, the class of one-shot entropic quantities~\cite{renner_2005} and in particular channel divergences~\cite{yuan_2019,gour_2020}.
The first class is known as the robustness measures~\cite{vidal_1999} defined for any $\E\in\OO_{\rm all}$ as
\bal
R_{\OO;\tilde\OO}(\E)\coloneqq\inf\lset 1+r \sbar \frac{\E+r\M}{1+r}\in\OO,\M\in\tilde\OO\rset,
\label{eq:robustness def}
\eal
where $\tilde\OO\subseteq \OO_{\rm all}$ is some set of channels containing $\OO$. The extreme case $\tilde\OO =\OO_{\rm all}$ is known as \textit{generalized robustness}~\cite{steiner_2003,harrow_2003,takagi_2019} and corresponds to the max-relative entropy~\cite{datta_2009}, hence we denote it by $R_{\max,\OO} \coloneqq R_{\OO;\OO_{\rm all}}$. The other case of interest is the  \textit{standard robustness} $R_{s,\OO} \coloneqq R_{\OO;\OO}$~\cite{yuan_2020,takagi2020optimal}.
We also define the smooth robustness $R_{\OO;\tilde\OO}^\epsilon(\E)\coloneqq \min\{R_{\OO;\tilde\OO}(\E')|F(\E',\E)\geq 1-\epsilon, \E'\in\OO_{\rm all}\}$ for $0\leq \epsilon\leq 1$.
The other type of measure, based on the hypothesis testing relative entropy~\cite{buscemi_2010-1,wang_2012,yuan_2019,gour_2020}, is defined for $\E\in\OO_{\rm all}$ as
\bal
R_{H,\OO}^\epsilon(\E)\coloneqq \min_{\M\in\OO}\max_{\psi}R_H^\epsilon(\idc\otimes\E(\psi)||\idc\otimes\M(\psi))
\eal
where $R_H^\epsilon(\rho||\sigma)\coloneqq  \max_{\substack{0\leq P \leq \id\\\Tr(P\rho)\geq 1-\epsilon}}\Tr(P\sigma)^{-1}$ and the optimization is restricted to pure input states $\psi$ without loss of generality. This entropic quantity characterizes the distinguishability between $\E$ and the channels in the set $\OO$. 
The case of $\epsilon=0$ is known as the min-relative entropy~\cite{datta_2009}, denoted as $R_{\min,\OO}(\E)$.

It is also useful to introduce two classifications for the given theory depending on the properties of $\OO$. 
We say that $\OO$ is \textit{full dimensional} if $\mathrm{span}(\OO)$ contains all channels in $\OO_{\rm all}$, and \textit{reduced dimensional} otherwise~\cite{regula_2020}. 
Intuitively, in full-dimensional theories, the set $\OO$ is of full measure, meaning that $R_{s,\OO}(\E) < \infty$ for all $\E$; examples include the theory of entanglement or local operations and shared randomness. 
On the other hand, reduced-dimensional theories are equipped with a set of free channels $\OO$ of zero measure, and the standard robustness $R_{s,\OO}$ can diverge, examples of which include the theory of coherence, asymmetry, and quantum thermodynamics. 
In order to characterize such resources, we will often need to consider an optimization with respect to $\aff(\OO)$, the affine hull of $\OO$~\cite{regula_2017,regula_2020,note_affine}, and we define in particular $R_{H,\aff(\OO)}^\epsilon$ (respectively, $R_{\min,\aff(\OO)}$) as the hypothesis testing entropy (min-relative entropy) minimized over $\aff(\OO)$ instead of $\OO$.

We can now state our main results which connect the resource monotones with general one-shot resource transformations.

\begin{theorem}
\label{thm:transformation general} 
Let $\E\in\OO_{\rm all}$ and $\N\in\OO_{\rm all}'$.
If there exists a free superchannel $\Theta\in\SS$ such that $F(\Theta(\E),\N)\geq 1-\epsilon$, then for any monotone $\mathfrak{R}_\OO$ it holds that $\mathfrak{R}_{\OO}(\E)\geq \mathfrak{R}_{\OO'}^\epsilon(\N)$, as well as that $\mathfrak{R}_{\OO}^\delta(\E)\geq \mathfrak{R}_{\OO'}^{2(\sqrt{\delta}+\sqrt{\epsilon})}(\N)$ for any $0\leq \delta \leq 1$ where $\mfR_\OO^\epsilon(\E)\coloneqq \min\{\mfR_\OO(\E')\,|\,F(\E',\E)\geq 1-\epsilon,\,\E'\in\OO_{\rm all}\}$.

Conversely, for any choices of $\epsilon, \delta \geq 0$ such that $\epsilon + 2\delta < 1$, there exists a free superchannel $\Theta\in\SS$ such that $F(\Theta(\E),\N)\geq 1 -\epsilon - 2\delta$ if $R_{H,\OO}^\delta(\E)\geq R_{s,\OO'}^\epsilon(\N)$ or if $R_{H,\aff(\OO)}^\delta(\E)\geq R_{\max,\OO'}^\epsilon(\N)$.
\end{theorem}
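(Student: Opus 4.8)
The plan is to treat the two directions separately, starting with the easy monotonicity bound and then tackling the converse achievability claim, which is the substantial part.

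For the first (necessity) direction, I would invoke the defining property of a resource monotone: any $\mathfrak{R}_\OO$ is non-increasing under free superchannels in $\SS$, since $\Theta$ maps free channels in $\OO$ to free channels in $\OO'$. Hence $\mathfrak{R}_\OO(\E)\geq\mathfrak{R}_{\OO'}(\Theta(\E))$. Because $F(\Theta(\E),\N)\geq 1-\epsilon$, the channel $\Theta(\E)$ is a feasible point in the definition of the smoothed quantity $\mathfrak{R}^\epsilon_{\OO'}(\N)$, giving $\mathfrak{R}_{\OO'}(\Theta(\E))\geq\mathfrak{R}^\epsilon_{\OO'}(\N)$. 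For the smoothed statement, I would first smooth $\E$ itself: take $\E'$ achieving $\mathfrak{R}^\delta_\OO(\E)$ with $F(\E',\E)\geq 1-\delta$, apply the above to $\E'$, and then combine the two fidelity bounds through the triangle-type inequality for the worst-case channel fidelity (equivalently, for the purified distance), which yields an error of order $2(\sqrt\delta+\sqrt\epsilon)$ on the output side. The only subtlety is checking that the worst-case fidelity obeys a purified-distance triangle inequality with the stated constants; this is standard and can be cited.

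For the converse (sufficiency) direction, the goal is to construct an explicit free superchannel $\Theta$. The natural ansatz is a \emph{measure-and-prepare} superchannel: use the optimal discrimination strategy witnessing $R^\delta_{H,\OO}(\E)$ (or $R^\delta_{H,\aff(\OO)}(\E)$) as a test on the input channel, and on the two outcomes either output a near-optimal approximation of $\N$ or output a free channel in $\OO'$. Concretely, let $P$ be the optimal test operator and $\M\in\OO$ (or $\aff(\OO)$) the optimal free channel in the definition of $R^\delta_{H,\OO}(\E)$, and let $\N'$ be the channel achieving $R^\epsilon_{s,\OO'}(\N)$ with its robustness decomposition $\N' + r\,\M_2 = (1+r)\,\M_1$ for $\M_1,\M_2\in\OO'$ and $1+r = R^\epsilon_{s,\OO'}(\N)$. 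The superchannel should, roughly, ``measure'' whether the input channel passes the test associated with $P$; conditioned on success it outputs $\N'$ and conditioned on failure it outputs $\M_2$. The mixing weights are chosen so that the overall map is (i) a valid superchannel, (ii) free, i.e.\ it sends every $\M\in\OO$ into $\OO'$ — this is exactly where the robustness decomposition of $\N'$ and the inequality $R^\delta_{H,\OO}(\E)\geq R^\epsilon_{s,\OO'}(\N)$ enter, since the failure probability on free channels must be large enough to dilute $\N'$ back into $\OO'$ — and (iii) acts on $\E$ with success probability at least $1-\delta$, so that the output $\epsilon$-approximates $\N$ up to the combined error $\epsilon + 2\delta$. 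The affine-hull variant works identically but with $R_{\max,\OO'}^\epsilon(\N)$ replacing $R_{s,\OO'}^\epsilon(\N)$, since then the diluting channel $\M_2$ may be taken in $\OO_{\rm all}'$ rather than $\OO'$, and freeness only needs to be checked against $\aff(\OO)$.

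The main obstacle is constructing the superchannel so that it is simultaneously a genuine superchannel (completely positive and trace-preserving in the appropriate sense on the level of Choi representations, which is more rigid than the state-case where measure-and-prepare maps are automatically valid channels) and provably free for \emph{every} $\M\in\OO$, not just for $\E$. Encoding a ``measurement of the input channel'' as a superchannel requires expressing the test in terms of the Choi operator of the input, which is always available, but verifying the no-resource-generation condition then reduces to a worst-case bound over all free $\M$; I expect this is where the hypothesis-testing-versus-standard-robustness inequality is used most delicately, and where the $2\delta$ loss (rather than $\delta$) comes from — the success probability on the worst free channel can differ from the success probability on $\E$, and both contribute. Handling the reduced-dimensional case cleanly, where $R_{s,\OO'}$ may be infinite and one is forced into the affine hull, is the other place where care is needed; I would organize the proof so that the full-dimensional and affine-hull arguments are parallel, with the only change being the set from which the diluting channel $\M_2$ is drawn.
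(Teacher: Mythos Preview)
Your approach is essentially the same as the paper's: monotonicity plus a Fuchs--van de Graaf triangle argument for the necessity direction, and an explicit measure-and-prepare superchannel $\Theta(\L)=\langle P^\star,\idc\otimes\L(\psi^\star)\rangle\,\tilde\N+\langle\id-P^\star,\idc\otimes\L(\psi^\star)\rangle\,\Q$ for the converse, with $\Q$ the diluting channel from the robustness decomposition of the smoothed target $\tilde\N$.

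Two small corrections to your commentary. First, the key to freeness on \emph{all} $\M\in\OO$ is a minimax swap (Sion's theorem applied to $R_{H,\OO}^\delta$): one does not use ``the optimal free channel $\M$'' at all, but rather chooses $P^\star,\psi^\star$ so that $\langle P^\star,\idc\otimes\M(\psi^\star)\rangle\leq R_{H,\OO}^\delta(\E)^{-1}$ holds uniformly in $\M\in\OO$ (and is \emph{constant} in the affine case, which is precisely what forces $\Theta(\M)$ to land on the robustness decomposition of $\tilde\N$ and hence in $\OO'$). Second, the $2\delta$ loss is more mundane than you speculate: it comes simply from concavity of the root fidelity, $\sqrt{F(\Theta(\E),\N)}\geq(1-\delta)\sqrt{1-\epsilon}$, and then $(1-\delta)^2(1-\epsilon)\geq 1-\epsilon-2\delta$; there is no separate contribution from the worst free channel.
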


Here, the parameters $\epsilon, \delta$ can be used to study the trade-offs between the error allowed in the transformation and the values of the smoothed monotones $\mfR_\OO^\epsilon$ of both the input and the output channel.

Notice that we provided two alternative achievability conditions: one using the hypothesis testing measure $R_{H,\OO}$ and the standard robustness $R_{s,\OO'}$, and one using the affine hypothesis testing measure $R_{H,\aff(\OO)}$ and the robustness $R_{\max,\OO'}$. The reason for this is that the former condition will typically trivialize in reduced-dimensional theories, while the latter condition trivializes for full-dimensional theories.

Thm.~\ref{thm:transformation general} establishes the conditions for general resource transformation universally applicable to any resource theory, including ones with specific structures of allowed channels, reflected by choosing appropriate sets $\OO_{\rm all}$ and $\OO_{\rm all}'$.
In the special case of manipulating quantum states (channels with trivial input), we recover the results of Ref.~\cite{Zhou2020general} and extend them to reduced-dimensional theories.
We stress that the monotones are all convex optimization problems and they reduce to computable semidefinite programs when $\OO$ is characterized by semidefinite constraints~\cite{Note1}.

\textit{\textbf{Distillation and dilution}}
---
Two of the most important classes of resource transformation tasks are resource distillation, where general resources are transformed into some reference target resources, and dilution, where the reference target resources are used to synthesize a given channel through free transformations. 
In particular, one is often interested in two quantities: the \textit{distillable resource} $d^\epsilon_{\OO} (\E)$ and the \textit{resource cost} $c^\epsilon_\OO(\E)$; choosing a suitable class of target reference channels $\TT \subseteq \OO'_{\rm all}$, we can define
\bal
  d^\epsilon_{\OO} (\E)\! &\coloneqq \sup \lset \mathfrak{R}_{\OO'}(\T) \!\bar\! F(\Theta(\E), \T) \geq 1\!-\!\epsilon,\; \T\! \in \TT,\; \Theta \in \SS \rset\\
  c^\epsilon_{\OO} (\E)\! &\coloneqq \inf \lset \mathfrak{R}_{\OO'}(\T) \!\bar\! F(\E, \Theta(\T)) \geq 1\!-\!\epsilon,\; \T\! \in \TT,\; \Theta \in \SS \rset
\eal
where $\mathfrak{R}_{\OO'}$ refers to any chosen monotone --- for example, $R_{\min,\OO'}$, $R_{\max,\OO'}$, or $R_{s,\OO'}$. In the discussion below, we will fix $\mathfrak{R}_{\OO'} = R_{\min,\OO'}$ for simplicity. 
The target channels $\TT$ are often chosen as multiple copies of some fixed reference channel, but we allow for broader types of targets.

Notably, under suitable conditions on the reference channels $\TT$, the necessary and sufficient conditions of Thm.~\ref{thm:transformation general} coincide, yielding a precise characterization of the resource cost.
\begin{corollary}\label{cor:dilution}
If the chosen reference set satisfies $R_{\min,\OO'}(\T) = R_{s,\OO'}(\T) \; \forall \T \in \TT$,
then it holds that $ c^\epsilon_{\OO} (\E) = \lceil\ROs^\epsilon(\E)\rceil_{\TT}$.

Similarly, if the chosen reference set obeys $R_{\min,\aff(\OO')}(\T) = R_{\max,\OO'}(\T) \; \forall \T \in \TT$,
then it holds that $ c^\epsilon_{\OO} (\E) = \lceil\ROg^\epsilon(\E)\rceil_{\TT}$.
\end{corollary}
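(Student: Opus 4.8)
The plan is to reduce both identities to a single equivalence and then unwind the definition of $\lceil\,\cdot\,\rceil_\TT$. Concretely, I would show that for every reference channel $\T\in\TT$ the statement ``$\T$ is feasible for $c^\epsilon_\OO(\E)$'', i.e.\ $F(\E,\Theta(\T))\ge 1-\epsilon$ for some $\Theta\in\SS$, is equivalent to the inequality $R_{\min,\OO'}(\T)\ge\ROs^\epsilon(\E)$ for the first claim, and to $R_{\min,\OO'}(\T)\ge\ROg^\epsilon(\E)$ for the second. Given this, since the corollary fixes $\mfR_{\OO'}=R_{\min,\OO'}$, the defining infimum $c^\epsilon_\OO(\E)=\inf\{R_{\min,\OO'}(\T)\mid\T\in\TT,\ \T\text{ feasible}\}$ becomes the infimum of $R_{\min,\OO'}$ over those $\T\in\TT$ whose value is at least $\ROs^\epsilon(\E)$ (resp.\ $\ROg^\epsilon(\E)$), which is exactly $\lceil\ROs^\epsilon(\E)\rceil_\TT$ (resp.\ $\lceil\ROg^\epsilon(\E)\rceil_\TT$). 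Both directions of the equivalence are instances of the two halves of Theorem~\ref{thm:transformation general}, applied with the input and output channels (and their theories) exchanged, so that the free superchannel runs from the reference theory into the theory of $\E$.

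For ``feasible $\Rightarrow$ inequality'' I would apply the monotonicity half of Theorem~\ref{thm:transformation general} to the monotone $\mfR=R_s$ in the first claim: since a free superchannel is affine on channels and maps the free set into the free set (and the allowed set into the allowed set), the standard robustness is non-increasing under it and hence is a valid monotone, so $R_{s,\OO'}(\T)\ge\ROs^\epsilon(\E)$; the hypothesis $R_{\min,\OO'}(\T)=R_{s,\OO'}(\T)$ on $\TT$ then upgrades this to the bound on $R_{\min,\OO'}(\T)$ that controls $c^\epsilon_\OO$. For the second claim one takes $\mfR=R_{\max}$ instead, giving $R_{\max,\OO'}(\T)\ge\ROg^\epsilon(\E)$, and notes that the hypothesis $R_{\min,\aff(\OO')}(\T)=R_{\max,\OO'}(\T)$ together with the general sandwich $R_{\min,\aff(\OO')}(\T)\le R_{\min,\OO'}(\T)\le R_{\max,\OO'}(\T)$ forces all three to coincide on $\TT$, so again $R_{\min,\OO'}(\T)\ge\ROg^\epsilon(\E)$.

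For ``inequality $\Rightarrow$ feasible'' I would use the achievability half of Theorem~\ref{thm:transformation general} with $\delta=0$. Because the hypothesis-testing relative entropy at $\delta=0$ equals the min-relative entropy, $R_{\min,\OO'}(\T)\ge\ROs^\epsilon(\E)$ is the same as $R_{H,\OO'}^{0}(\T)\ge\ROs^\epsilon(\E)$, so (assuming $\epsilon<1$, the case $\epsilon=1$ being trivial) Theorem~\ref{thm:transformation general} yields a $\Theta\in\SS$ with $F(\Theta(\T),\E)\ge 1-\epsilon$; for the second claim the equality $R_{\min,\aff(\OO')}(\T)=R_{\min,\OO'}(\T)$ forced above lets $R_{\min,\OO'}(\T)\ge\ROg^\epsilon(\E)$ feed the affine achievability condition $R_{H,\aff(\OO')}^{0}(\T)\ge\ROg^\epsilon(\E)$ in the same way. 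This completes the equivalence, hence the two identities; the degenerate cases, where no reference channel is feasible (both sides equal $+\infty$) or the infimum is not attained (pass to a limiting sequence), are routine. The substance is entirely carried by Theorem~\ref{thm:transformation general}; the only point that needs care is to check that the smoothing ball defining $\ROs^\epsilon$ and $\ROg^\epsilon$ is precisely the worst-case channel fidelity that both constrains $c^\epsilon_\OO$ and appears in Theorem~\ref{thm:transformation general}, and that choosing $\delta=0$ leaves no residual error loss --- which is exactly why necessity and sufficiency meet here, making the identity tight rather than merely a sandwich.
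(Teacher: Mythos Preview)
Your proposal is correct and follows precisely the route the paper intends: the Corollary is stated without an explicit proof, as it is meant to be an immediate consequence of both halves of Theorem~\ref{thm:transformation general} applied with the roles of $\OO$ and $\OO'$ exchanged, and your argument spells out exactly this derivation. The only points worth flagging are minor: the inequality $R_{\min,\OO'}\le R_{\max,\OO'}$ in your sandwich deserves a one-line justification (it follows since any $\M\in\OO'$ with $\E\le_{\mathrm{cp}}\lambda\M$ gives $\langle\Pi_{\idc\otimes\E(\psi)},\idc\otimes\M(\psi)\rangle\ge 1/\lambda$), and for the affine achievability at $\delta=0$ you may invoke $R_{\min,\aff(\OO')}$ directly rather than passing through $R_{H,\aff(\OO')}^0$, as the Remark following Theorem~\ref{thm:transformation general} explicitly permits.
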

Here, we used the notation $\lceil\cdot\rceil_{\TT}$ to indicate the smallest value greater than or equal to the argument for which there exists a corresponding channel $\T \in \TT$ --- this is required, for instance, when $\TT$ forms a discrete set (see Ref.~\cite{Note1}).

The result establishes an operational meaning of the measures $\ROs^\epsilon$ or $\ROg^\epsilon$ as long as the conditions are satisfied. This raises the question: when does a choice of reference channels $\TT$ satisfying $\ROmin (\T) = \ROs (\T)$ or $R_{\min,\aff(\OO)} (\T) = \ROg (\T)$ exist? It emerges that this is a commonly occurring feature of resource theories. For instance, it is satisfied by relevant choices of target channels in resource theories of quantum memories and communication, immediately providing a characterization of one-shot simulation cost of channels. When discussing the transformations of quantum states, it was shown that in \textit{any} convex resource theory there exists maximal ``golden states'' $\phi_{\rm gold}$ such that $R_{\min,\OO}(\phi_{\rm gold})=R_{\max,\OO}(\phi_{\rm gold})$~\cite{regula_2020}, and indeed such states satisfy all requirements of the Corollary in theories such as quantum entanglement or quantum coherence. In the case of quantum state manipulation, our result recovers the considerations of Ref.~\cite{liu_2019}, where a general framework for quantum state resources was established.

We now turn to the case of distillation. Here, we can improve the bound in Thm.~\ref{thm:transformation general} and obtain an alternative necessary condition. Importantly, distillation is often understood as the \textit{purification} of noisy resources, in which case it is natural to consider pure reference channels $\TT$ --- for instance, if the input space and output spaces coincide, unitary channels can serve as targets, while if the input space is trivial, then pure-state preparation channels can be regarded as targets.
The property needed for $\N$ to serve as the reference resource is that the output states for pure input states remain pure.
In such cases, we obtain the following general conditions.

\begin{theorem}
\label{thm:general distillation}
Let $\E\in\OO_{\rm all}$, $\N\in\OO_{\rm all}'$ and suppose $\idc\otimes\N(\psi)$ is pure for any pure state $\psi$.
If there exists a free superchannel $\Theta\in\SS$ such that $F(\Theta(\E),\N)\geq 1-\epsilon$, then it holds that $R_{H,\OO}^\epsilon(\E)\geq R_{\min,\OO'}(\N)$ and $R_{H,\aff(\OO)}^\epsilon(\E)\geq R_{\min,\aff(\OO')}(\N)$.

Conversely, there exists a free superchannel $\Theta\in\SS$ such that $F(\Theta(\E),\N)\geq 1 - \epsilon -\delta$ if $R_{H,\OO}^\epsilon(\E)\geq R_{s,\OO'}^\delta(\N)$ or if $R_{H,\aff(\OO)}^\epsilon(\E)\geq R_{\max,\OO'}^\delta(\N)$.
\end{theorem}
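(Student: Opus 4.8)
The plan is to prove the two directions separately, building on the superchannel realisation and the hypothesis-testing/robustness toolkit already used for Thm.~\ref{thm:transformation general}. For the necessity bound I would first show a pointwise monotonicity of the hypothesis-testing channel divergence: writing any $\Theta\in\SS$ in its standard realisation $\Theta(\Lambda)=\D\circ(\Lambda\otimes\idc_E)\circ\C$ with pre- and post-processing channels $\C,\D$, and letting $\M^*\in\OO$ attain the minimum in the definition of $R_{H,\OO}^\epsilon(\E)$ (so $\Theta(\M^*)\in\OO'$), I would apply the data-processing inequality for $R_H^\epsilon$ under $\idc\otimes\D$ and then under tracing out a purifying system of $(\idc\otimes\C)(\psi)$, and recognise the outcome as an instance of the worst-case optimisation defining $R_{H,\OO}^\epsilon(\E)$ at $\M^*$, giving
\begin{equation}
R_H^\epsilon\big(\idc\otimes\Theta(\E)(\psi)\,\big\|\,\idc\otimes\Theta(\M^*)(\psi)\big)\leq R_{H,\OO}^\epsilon(\E)
\end{equation}
for every pure input $\psi$. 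Then I would bring in the purity of $\N$: picking $\psi^\circ$ optimal for the pair $(\N,\Theta(\M^*))$ in the definition of $R_{\min,\OO'}$, the output $\idc\otimes\N(\psi^\circ)=\proj\phi$ is pure, so the rank-one projector $\proj\phi$ is a feasible test for $R_H^\epsilon$ because $\Tr\big(\proj\phi\,\idc\otimes\Theta(\E)(\psi^\circ)\big)=F\big(\idc\otimes\Theta(\E)(\psi^\circ),\idc\otimes\N(\psi^\circ)\big)\geq F(\Theta(\E),\N)\geq 1-\epsilon$; feasibility of this test gives $R_H^\epsilon(\idc\otimes\Theta(\E)(\psi^\circ)\,\|\,\idc\otimes\Theta(\M^*)(\psi^\circ))\geq R_{\min}(\proj\phi\,\|\,\idc\otimes\Theta(\M^*)(\psi^\circ))\geq R_{\min,\OO'}(\N)$, which chained with the displayed inequality yields $R_{H,\OO}^\epsilon(\E)\geq R_{\min,\OO'}(\N)$. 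The affine version follows verbatim with $\OO,\OO'$ replaced by $\aff(\OO),\aff(\OO')$, since $\Theta$ is linear and thus maps $\aff(\OO)$ into $\aff(\OO')$, together with the usual extension of $R_H^\epsilon$ and $R_{\min}$ to Hermiticity-preserving maps.

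For the achievability direction I would follow the construction behind the converse of Thm.~\ref{thm:transformation general}. First smooth the target: obtain $\N'$ with $F(\N',\N)\geq 1-\delta$ and $R_{s,\OO'}(\N')\leq R_{s,\OO'}^\delta(\N)\leq R_{H,\OO}^\epsilon(\E)$, and write $\N'+r\M_1=(1+r)\M_2$ with $\M_1,\M_2\in\OO'$ and $1+r\leq R_{H,\OO}^\epsilon(\E)$ (the second case is identical after replacing $\OO'$ by $\aff(\OO')$ and $R_{s,\OO'}$ by $R_{\max,\OO'}$). Then construct a free superchannel that probes $\E$ with an input and a binary hypothesis test witnessing $R_{H,\OO}^\epsilon(\E)$ and, conditioned on the outcome, implements $\N'$ or $\M_1$ on the routed input system; the inequality $1+r\leq R_{H,\OO}^\epsilon(\E)$ is exactly what forces $\Theta(\M)$, for every $\M\in\OO$, to be a convex combination of $\M_1,\M_2$, hence to lie in $\OO'$, so $\Theta\in\SS$. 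The key point relative to Thm.~\ref{thm:transformation general} is the final fidelity estimate: because $\idc\otimes\N(\psi)$ is pure, $F(\Theta(\E),\N)$ is just the acceptance probability of the test (at least $1-\epsilon$) times $F(\N',\N)$ (at least $1-\delta$) --- the rejecting branch only adds a nonnegative overlap --- so $F(\Theta(\E),\N)\geq(1-\epsilon)(1-\delta)\geq 1-\epsilon-\delta$, with no quadratic loss.

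The main obstacle I anticipate is making the achievability construction rigorous at the level of superchannels rather than states. The channel quantity $R_{H,\OO}^\epsilon(\E)=\min_{\M\in\OO}\max_\psi R_H^\epsilon(\cdot\|\cdot)$ couples the choice of input state and of test, so extracting a witness that is simultaneously effective against all $\M\in\OO$ --- the analogue of the single-test saddle point available in the state case --- is not a direct application of Sion's theorem (the relevant constraint set is not jointly convex) and must be routed through the minimax/SDP-duality arguments underlying the proof of Thm.~\ref{thm:transformation general}. One must also check that ``probe, test, conditionally prepare'' is realisable as a genuine superchannel consistently routing the input of $\Theta(\E)$, and that in the $R_{\max,\OO'}$/affine variant, where $\M_1$ lies only in $\aff(\OO')$, the constructed $\Theta$ maps $\aff(\OO)$ into $\aff(\OO')$ and keeps the relevant operators positive semidefinite where that is needed. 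Modulo these points --- which parallel the ones already handled in the proof of Thm.~\ref{thm:transformation general} --- the new content here is precisely the purity-driven tightening of the error bound.
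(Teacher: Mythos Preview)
Your proposal is correct and follows essentially the same route as the paper. For necessity, the paper simply invokes the channel-level data processing inequality $R_H^\epsilon(\E\|\M)\geq R_H^\epsilon(\Theta(\E)\|\Theta(\M))$ (citing it as known) and then, exactly as you do, plugs in the pure projector $\idc\otimes\N(\psi')$ as a feasible test; your argument via the $\D\circ(\cdot\otimes\idc_E)\circ\C$ realisation is just an explicit proof of that same data-processing step. For achievability, both you and the paper reuse the measure-and-prepare superchannel from Thm.~\ref{thm:transformation general} (with the minimax witness supplied by Lemma~\ref{lem:minimax}) and observe that purity of $\idc\otimes\N(\psi)$ turns the worst-case fidelity into a linear overlap, yielding $(1-\epsilon)(1-\delta)\geq 1-\epsilon-\delta$ directly rather than through the square-root concavity bound.
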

Thm.~\ref{thm:general distillation} adds a useful alternative characterization for distillation to the general condition provided by Thm.~\ref{thm:transformation general}.
In particular, similarly to the case of dilution, we can obtain the following.
\begin{corollary}\label{cor:distillation}
Consider any reference set $\TT$ such that $\idc \otimes \T(\psi)$ is pure for any pure $\psi$ and any $\T \in \TT$.

If $\TT$ also satisfies $R_{\min,\OO}(\T) = R_{s,\OO}(\T) \; \forall \T \in \TT$,
then it holds that $d^\epsilon_{\OO} (\E) = \lfloor R_{H,\OO}^\epsilon(\E)\rfloor_{\TT}$.

Similarly, if the chosen reference set obeys $R_{\min,\aff(\OO)}(\T) = R_{\max,\OO}(\T) \; \forall \T \in \TT$,
then it holds that $d^\epsilon_{\OO} (\E) = \lfloor R_{H,\aff(\OO)}^\epsilon(\E)\rfloor_{\TT}$.
\end{corollary}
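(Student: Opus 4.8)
The plan is to obtain both identities directly from Theorem~\ref{thm:general distillation}, handling the two cases in parallel and splitting each into a matching upper and lower bound on $d^\epsilon_\OO(\E)$. Throughout I keep the convention $\mathfrak{R}_{\OO'}=R_{\min,\OO'}$ fixed as in the definition of $d^\epsilon_\OO$, and I read $\lfloor x\rfloor_\TT$ as the largest value $\le x$ of the form $R_{\min,\OO'}(\T)$ with $\T\in\TT$ (the counterpart of the $\lceil\cdot\rceil_\TT$ used for dilution).

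\emph{Upper bound.} Suppose $\Theta\in\SS$ and $\T\in\TT$ are feasible in the definition of $d^\epsilon_\OO(\E)$, i.e.\ $F(\Theta(\E),\T)\ge 1-\epsilon$. Since $\idc\otimes\T(\psi)$ is pure for every pure $\psi$ by the hypothesis on $\TT$, the necessary condition of Theorem~\ref{thm:general distillation} applies with $\N=\T$ and gives $R_{H,\OO}^\epsilon(\E)\ge R_{\min,\OO'}(\T)$, and in the affine case $R_{H,\aff(\OO)}^\epsilon(\E)\ge R_{\min,\aff(\OO')}(\T)$. Thus every attainable target value is bounded above by the relevant entropic quantity, so taking the supremum over feasible pairs yields $d^\epsilon_\OO(\E)\le\lfloor R_{H,\OO}^\epsilon(\E)\rfloor_\TT$, respectively $d^\epsilon_\OO(\E)\le\lfloor R_{H,\aff(\OO)}^\epsilon(\E)\rfloor_\TT$.

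\emph{Lower bound.} Pick $\T^\star\in\TT$ realizing $\lfloor R_{H,\OO}^\epsilon(\E)\rfloor_\TT$, so $R_{\min,\OO'}(\T^\star)\le R_{H,\OO}^\epsilon(\E)$. The assumed reference property $R_{\min,\OO'}(\T^\star)=R_{s,\OO'}(\T^\star)$ then turns this into $R_{H,\OO}^\epsilon(\E)\ge R_{s,\OO'}(\T^\star)=R_{s,\OO'}^{0}(\T^\star)$, which is precisely the achievability hypothesis of Theorem~\ref{thm:general distillation} with $\delta=0$ and $\N=\T^\star$; hence there is $\Theta\in\SS$ with $F(\Theta(\E),\T^\star)\ge 1-\epsilon$, so $\T^\star$ is feasible and $d^\epsilon_\OO(\E)\ge R_{\min,\OO'}(\T^\star)=\lfloor R_{H,\OO}^\epsilon(\E)\rfloor_\TT$. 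Combined with the upper bound this proves the first identity. The affine case is verbatim the same, replacing $R_{H,\OO}^\epsilon$, $R_{s,\OO'}$, $R_{\min,\OO'}$ by $R_{H,\aff(\OO)}^\epsilon$, $R_{\max,\OO'}$, $R_{\min,\aff(\OO')}$ and invoking the second achievability branch of Theorem~\ref{thm:general distillation} together with $R_{\min,\aff(\OO')}(\T^\star)=R_{\max,\OO'}(\T^\star)$.

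The only genuinely delicate point is the bookkeeping around $\lfloor\cdot\rfloor_\TT$: one must ensure the supremum defining $d^\epsilon_\OO(\E)$ is realized by an element of $\TT$ (so the two bounds meet exactly), which is exactly what the $\lfloor\cdot\rfloor_\TT$ convention encodes --- in particular for discrete $\TT$, e.g.\ integer numbers of copies of a fixed pure reference channel, where one rounds down to the nearest realizable value. Everything else is a direct substitution of the assumed monotone equalities into the two directions of Theorem~\ref{thm:general distillation}, so no new estimates are needed.
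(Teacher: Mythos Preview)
Your proof is correct and matches the paper's intended derivation: the corollary is not given a separate proof in the paper but is stated as an immediate consequence of Theorem~\ref{thm:general distillation}, and your argument does exactly this by combining the necessary and sufficient directions with $\delta=0$. One small remark: in the affine branch your ``verbatim'' substitution replaces $R_{\min,\OO'}$ by $R_{\min,\aff(\OO')}$, but the definition of $d^\epsilon_\OO$ and of $\lfloor\cdot\rfloor_\TT$ is still in terms of $R_{\min,\OO'}$; this is harmless because the hypothesis $R_{\min,\aff(\OO')}(\T)=R_{\max,\OO'}(\T)$ together with the general sandwich $R_{\min,\aff(\OO')}\le R_{\min,\OO'}\le R_{\max,\OO'}$ forces all three to coincide on $\TT$, so the two labelings agree.
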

This establishes a precise characterization of distillable resource in any resource theory for suitable target channels $\TT$, and furthermore gives an exact operational meaning to the resource measures $R_{H,\OO}^\epsilon$ and $R_{H,\aff(\OO)}^\epsilon$ in the task of distillation. When the manipulated objects are quantum states, we recover the results of Refs.~\cite{liu_2019,regula_2020}.

In some cases, distillation with the desired precision might not be possible. It is then of interest to instead ask how close one can approximate the chosen target channel, that is, characterize the maximal achievable \textit{fidelity of distillation}. We can adapt our methods to obtain close upper and lower bounds for this quantity. Importantly, the bounds become tight for relevant reference states obeying conditions as in Cor.~\ref{cor:distillation}, allowing us to provide an exact expression for the fidelity of distillation. We discuss the full details in Ref.~\cite{Note1}.

Another bound on distillation fidelity was recently presented with respect to the so-called resource weight~\cite{regula_2020-2,fang_2020-2}. 
Our approach allows one to extend the insight from these works to theories with arbitrary channel structures $\OO_{\rm all}$, enabling an operational application of the corresponding weight measures, some of which were previously introduced in other contexts~\cite{lewenstein_1998,de_Vicente_2014_nonlocality,skrzypczyk_2014,grudka_2014,Abramsky2017fraction,Pusey2015incompatibility}.

One can obtain additional results in the characterization of distillation and dilution in special cases of resource theories, for example, when the given theory is concerned with an underlying state-based resource. We discuss such cases in Ref.~\cite{Note1}.


\textit{\textbf{Quantum communication}}
---
A central problem in quantum communication is to manipulate a given channel to enhance its communication capabilities using resources available to both parties.
Quantum capacity~\cite{bennett_1996,lloyd_1997,bennett_2002} and simulation cost~\cite{bennett_2002,Bennett2014reverse,Berta2011reverse} are important figures of merit to evaluate the operational capability of quantum channels, and their one-shot characterization received considerable attention recently~\cite{Leung2015NS,tomamichel_2016-1,wang_2019-3,Fang2020max,takagi_2020,Hsieh2020communication}.
These tasks are precisely channel distillation and dilution where the reference resource is the identity channel, and the sets of free channels and free superchannels specify the accessible resources for the sender and receiver. 
Formally, we define the one-shot quantum capacity and simulation cost with the set of free superchannels $\SS$ as
\bal
Q^\epsilon_\SS(\E)&\coloneqq\max\lset\log d\sbar \exists \Theta\in\SS, F(\Theta(\E),\idc_d)\geq 1-\epsilon\rset\\
C^\epsilon_\SS(\E)&\coloneqq\min\lset\log d\sbar \exists \Theta\in\SS, F(\Theta(\idc_d),\E)\geq 1-\epsilon\rset.
\eal

We can then use our results to immediately obtain an exact characterization of these quantities in relevant settings. For example, when $\OO$ is the set of separable channels $\OO_{\rm SEP}$ whose Choi states are separable, this setting corresponds to communication with separability-preserving codes $\SS = \SS_{\rm SEP}$.
This is a full-dimensional theory and it holds that $R_{\min,\OO_{\rm SEP}}(\idc_d)=R_{s,\OO_{\rm SEP}}(\idc_d)=d$~\cite{yuan_2020}.
Then, Cors.~\ref{cor:dilution} and~\ref{cor:distillation} provide a complete characterization of one-shot quantum capacity and simulation cost as $Q_{\SS_{\rm SEP}}^\epsilon(\E)=\log \lfloor R_{H,\OO_{\rm SEP}}^\epsilon(\E) \rfloor$ and $C_{\SS_{\rm SEP}}^\epsilon(\E)=\log \lceil R_{s,\OO_{\rm SEP}}^\epsilon(\E) \rceil$, the latter of which recovers a result of Ref.~\cite{yuan_2020}. Similar results apply to the setting of communication assisted by codes preserving the positivity of the partial transpose (PPT), where $\OO_{\rm PPT}$ is the set of PPT channels~\cite{rains_2001,Leung2015NS}. We analogously obtain $Q_{\SS_{\rm PPT}}^\epsilon(\E)=\log\lfloor R_{H,\OO_{\rm PPT}}^\epsilon(\E) \rfloor$ and $C_{\SS_{\rm PPT}}^\epsilon(\E)=\log \lceil R_{s,\OO_{\rm PPT}}^\epsilon(\E) \rceil$. Interestingly, $R_{H,\OO_{\rm PPT}}^\epsilon$ appeared as a bound in Ref.~\cite{tomamichel_2016-1}.

Another example is the case when $\OO$ is the set of replacement channels $\OO_R\coloneqq\{\R_\sigma|\R_\sigma(\cdot)=\Tr(\cdot)\sigma\}$, where $\SS$ becomes the set of channel transformations assisted by no-signalling (NS) correlations, $\SS_{\rm NS}$~\cite{Duan2016nosignalling,takagi_2020}.
Since $\OO_R$ is closed under linear combinations, it is reduced dimensional.   
We also have $R_{\min,\aff({\OO_{R}})}(\idc_d)= R_{\max,\OO_{R}}(\idc_d)=d^2$, and our results
immediately give $Q_{\SS_{\rm NS}}^\epsilon(\E)=\frac{1}{2}\log \lfloor R_{H,\aff(\OO_R)}^\epsilon(\E)\rfloor$ and $C_{\SS_{\rm NS}}^\epsilon(\E)=\frac{1}{2}\log \lceil R_{\max,\OO_{R}}^\epsilon(\E) \rceil$.
The one-shot NS-assisted quantum capacity was obtained in Ref.~\cite{wang_2019-3} in the form of a semidefinite program; our result identifies it with the affine hypothesis testing relative entropy, providing an operational meaning to this resource measure. 
The one-shot NS simulation cost was obtained in Refs.~\cite{takagi_2020,Fang2020max}, which is recovered by our general approach as a special case. Furthermore, we can quantify exactly the fidelity of NS-assisted coding~\cite{Note1}, recovering a result of Ref.~\cite{Leung2015NS}.

Our methods apply also to the study of the entanglement of bipartite channels~\cite{gour_2019,bauml_2019}, where the target resources are maximally entangled states in the underlying state-based resource theory. We then establish exact one-shot rates of channel manipulation under PPT- and separability-preserving superchannels (see Ref.~\cite{Note1}).



\textit{\textbf{Nonlocality and contextuality}}
---
Quantum nonlocality has been a major subject of study not only as a key feature of quantum theory, but also as a useful resource in a number of operational tasks~\cite{Buhrman2010nonlocality,Cubitt2010zeroerror,Barrett2005key,Acin2006from,pironio_random_2010,Vazirani2014fully}.
The latter view motivates a precise understanding of the manipulation of nonlocal resources~\cite{de_Vicente_2014_nonlocality,Forster2009nonlocality,Nery2020steering,Gallego2015steering}, but the characterization of such one-shot transformations has remained elusive. 
Our framework encompasses this scenario by choosing $\OO_{\rm all}$ to be the set of no-signalling channels, where classical input or output systems are represented through dephasing in a given basis, and taking $\OO$ to be the channels that can be constructed by local operations and shared randomness. This includes not only the standard setting of Bell nonlocality (where such channels are the classical ``boxes''), but also more general resources such as steering (where channels represent ``assemblages'')~\cite{Schmid2020typeindependent,Rosset2020type}.
Our results then provide an operational application of nonlocality measures individually introduced in different settings of nonlocality~\cite{de_Vicente_2014_nonlocality,Piani2015steering} and unified in Refs.~\cite{Schmid2020typeindependent,Rosset2020type}, which we relate to one-shot resource transformations (see also Ref.~\cite{Wolfe2020quantifyingbell}). We also introduce new monotones to this setting, which can add further insights into feasible manipulation of nonlocal resources. 
For instance: since a noisy box of the form $\B = (1-\epsilon)\B_{\rm PR} + \epsilon \B_{\rm free}$, where $\B_{\rm PR}$ is the Popescu-Rohrlich (PR) box~\cite{Popescu1994PR} and $\B_{\rm free}\in\OO$ is some local box, has $R_{\min,\OO}(\B^{\otimes n}) = 1 \; \forall n$, we recover the fact that it cannot be distilled to other fundamental resources such as the PR box even when multiple copies of the box $\B$ are available~\cite{dukaric2008limit,Forster2009nonlocality}.
In Ref.~\cite{Note1}, we numerically evaluate the resource measures for a special class known as isotropic boxes~\cite{Masanes2006isotropic} where we explicitly observe this property.

Notably, our framework can also be applied to another related phenomenon known as quantum contextuality, which also serves as an operational resource~\cite{Hameedi2017communication,Raussendorf2013contextuality,Howard204contextuality,Bermejo-Vega2017contextuality,Kleinmann2011memory,Cabello2018memory,Schmid2018contextual}.
Namely, we consider the set of all classical-classical channels for consistent boxes~\cite{grudka_2014} as $\OO_{\rm all}$ and the set of channels corresponding to noncontextual boxes as $\OO$.
Our results characterize the exact and approximate box transformations with operations that do not create contextuality, offering a new perspective to the recent resource-theoretic framework~\cite{Horodecki2015axiomatic, Duarte2018contextuality,Amaral2018wiring}, as well as providing operational application of the robustness of contextuality~\cite{Meng2016robustness,Meng2016generalized,Li2020contextual}, contextual fraction~\cite{Abramsky2011sheaf,grudka_2014,Abramsky2017fraction}, and hypothesis testing measures introduced in this work. 


\textit{\textbf{Measurement incompatibility}}
---
Measurement incompatibility refers to the impossibility of simultaneous measurement and is closely related to the aforementioned phenomena such as Bell nonlocality and steering~\cite{Quintino2014joint,Uola2014joint,Pusey2015incompatibility}.
The set of POVMs $\{M_{a|x}\}$, where $M_{a|x}$ is the POVM element with outcome $a$ for the measurement labeled by setting $x$, is called compatible (or jointly measurable) if there exists a parent measurement $\{P_i\}$ and a conditional probability distribution $\{q(a|x,i)\}$ such that $M_{a|x}=\sum_i q(a|x,i)P_i$.

Our formalism can handle scenarios where resources take the form of ensembles by incorporating the classical labels of the ensembles into the description of $\OO_{\rm all}$.
In the case of measurement incompatibility, $\OO_{\rm all}$ represents the set of channels corresponding to POVMs, while $\OO\subseteq\OO_{\rm all}$ represents the set of compatible POVMs
(see Ref.~\cite{Note1}).
This form allows one to apply our results to this setting and characterize the approximate one-shot transformation of incompatible sets of measurements, complementing the previous works which focused on exact transformation with smaller sets of free operations in different approaches~\cite{Skrzypczyk2019all,Wenbin2020incompatibility} and providing operational applications of the related measures~\cite{Haapasalo2015robustness,Uola2015robustness,Pusey2015incompatibility,designolle_2019}.
Although the robustness and weight measures are usually defined at the level of POVMs, we show in Ref.~\cite{Note1} that they coincide with the channel-based measures defined in our framework, allowing one to carry over the previous analyses to characterize resource transformations.
We also note that the discussion here can be straightforwardly extended to channel incompatibility~\cite{Heinosaari2016invitation,Heinosaari2017channel}, which includes measurement incompatibility as a special case.  

\textit{\textbf{Conclusions}}
---
We established fundamental bounds on the transformations between general dynamical quantum resources in the one-shot regime. We tightly characterized the ability to manipulate resources by providing conditions for convertibility in terms of the robustness and hypothesis testing measures. In particular, under suitable assumptions, we established an exact quantification of the one-shot distillable resource and one-shot resource cost of general channels, giving a precise operational interpretation to the considered monotones in these important tasks. This not only extends and unifies previous specialized results~\cite{Leung2015NS,wang_2019-3,yuan_2020,gour_2019,bauml_2019,saxena_2020}, but also sheds light on the general structure of dynamical resource theories by providing a common description of their operational aspects. 
Besides contributing to the theory of quantum resources, our methods find direct practical use, as we exemplified with several explicit applications.



\begin{acknowledgments}

\textit{Note --- } During the completion of this manuscript, we became aware of two related works, Ref.~\cite{kim_2020} by Kim et al.\ and Ref.~\cite{yuan_2020-1} by Yuan et al., where the authors independently obtained results overlapping with some of our findings. The former work considers one-shot distillation and dilution of quantum channel entanglement, which coincides with our characterization of these tasks in Corollaries~\ref{cor:dilution} and \ref{cor:distillation} (see Ref.~\cite{Note1}), while the latter work introduces a general framework for one-shot channel distillation and dilution which again corresponds to our Cors.~\ref{cor:dilution} and \ref{cor:distillation} for the cases of quantum channel resources.

We thank David Schmid for sharing useful references with us, which helped us expand the scope of our results. We are also grateful to the authors of Refs.~\cite{kim_2020,yuan_2020-1} for coordinating the submission of our works.
B.R.\ was supported by the Presidential Postdoctoral Fellowship from Nanyang Technological University, Singapore.
R.T.\ acknowledges the support of NSF, ARO, IARPA, AFOSR, the Takenaka Scholarship Foundation, the National Research Foundation (NRF) Singapore, under its NRFF Fellow programme (Award No. NRF-NRFF2016-02), and the Singapore Ministry of Education Tier 1 Grant 2019-T1-002-015. Any opinions, findings and conclusions or recommendations expressed in this material are those of the author(s) and do not reflect the views of National Research Foundation, Singapore.

\end{acknowledgments}

\bibliographystyle{apsrev4-1a}
\bibliography{bib_bartosz,bib_ryuji}

\appendix
\widetext

\setcounter{theorem}{4}

\section{Properties of the resource measures}\label{app:properties_measures}

We will use the notation $\<\cdot, \cdot\>$ for the Hilbert-Schmidt inner product $\<A,B\> = \Tr(A^\dagger B)$.

Recall that
\bal
R_{H,\OO}^\epsilon(\E) = \min_{\M\in\OO}\max_{\psi}R_H^\epsilon(\idc\otimes\E(\psi)||\idc\otimes\M(\psi))
\eal
where $R_H^\epsilon(\rho||\sigma)\coloneqq  \max_{\substack{0\leq P \leq \id\\\<P,\rho\>\geq 1-\epsilon}}\<P,\sigma\>^{-1}$. Here, for a channel which takes operators acting on an input Hilbert space $\H_{\rm in}$ to operators acting on  $\H_{\rm out}$, the optimization $\max_{\psi}$ means that we are optimizing over all pure states acting on the Hilbert space $\H_{\rm in'} \ox \H_{\rm in}$ with $\H_{\rm in'} \cong \H_{\rm in}$.

While the quantities $R_{H,\OO}^\epsilon$, $G_{\OO}$, and their affine equivalents might seem like daunting optimization tasks, they actually admit a number of useful properties and can be expressed as convex optimization problems. To see this, we first establish a useful minimax result.
\begin{lemma}\label{lem:minimax}
 For any closed and convex set $\OO$, it holds that
\begin{equation}\begin{aligned}
	R_{H,\OO}^\epsilon(\E)^{-1} &= \min_{\psi} \lset \lambda \bar \< P, \idc \otimes \M (\psi)\> \leq \lambda \; \forall \M\in \OO,\; 0\leq P \leq \id,\; \< P, \idc \otimes \E (\psi)\> \geq 1-\epsilon \rset\\
	R_{H,\aff(\OO)}^\epsilon(\E)^{-1} &= \min_{\psi} \lset \lambda \bar \< P, \idc \otimes \M (\psi)\> = \lambda \; \forall \M\in \OO,\; 0\leq P \leq \id,\; \< P, \idc \otimes \E (\psi)\> \geq 1-\epsilon \rset.
\end{aligned}\end{equation}
\end{lemma}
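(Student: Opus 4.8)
The plan is to pass from the definition of $R_{H,\OO}^\epsilon(\E)$ to a max--min over a convex domain and then apply Sion's minimax theorem~\cite{sion_1958}. First I would unfold the definitions: since $R_H^\epsilon(\rho\|\sigma)^{-1}=\min\{\langle P,\sigma\rangle : 0\le P\le\id,\ \langle P,\rho\rangle\ge 1-\epsilon\}$ (with $\infty^{-1}=0$), taking inverses turns the two ``min over $\OO$, max over $\psi$'' layers into $R_{H,\OO}^\epsilon(\E)^{-1}=\max_{\M\in\OO}\ \min\bigl\{\langle P,\idc\otimes\M(\psi)\rangle : \psi\ \text{pure},\ 0\le P\le\id,\ \langle P,\idc\otimes\E(\psi)\rangle\ge 1-\epsilon\bigr\}$, and likewise with $\aff(\OO)$ in place of $\OO$. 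The obstruction to swapping $\max_\M$ with the inner minimisation is that the set of pure inputs $\psi$ is not convex. I would resolve this by absorbing the input state: writing a pure $|\psi\rangle$ on $\H_{\rm in'}\otimes\H_{\rm in}$ as $(\Psi\otimes\id)|\Omega\rangle$ with $|\Omega\rangle$ the unnormalised maximally entangled vector and $\Tr(\Psi^\dagger\Psi)=1$, one has $\idc\otimes\Lambda(\psi)=(\Psi\otimes\id)J_\Lambda(\Psi^\dagger\otimes\id)$ for every linear map $\Lambda$, where $J_\Lambda$ is its Choi operator. Substituting $\tilde P:=(\Psi^\dagger\otimes\id)P(\Psi\otimes\id)$ and $\sigma:=\Psi^\dagger\Psi$, the objective becomes the bilinear form $\langle\tilde P,J_\M\rangle$ and the inner feasible set becomes $C:=\{(\tilde P,\sigma) : \sigma\ge 0,\ \Tr\sigma=1,\ 0\le\tilde P\le\sigma\otimes\id,\ \langle\tilde P,J_\E\rangle\ge 1-\epsilon\}$, which is convex and compact. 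The map $(\psi,P)\mapsto(\tilde P,\sigma)$ sends the pure-input feasible set onto $C$ and preserves the objective (for rank-deficient $\sigma$ one uses $\Psi=\sqrt\sigma$ together with a generalised inverse on $\supp\sigma$), so the two minimisations have the same value.

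Now Sion's minimax theorem applies verbatim: $\{J_\M : \M\in\OO\}$ and $C$ are both convex and compact, and $((\tilde P,\sigma),\M)\mapsto\langle\tilde P,J_\M\rangle$ is affine in each variable. Hence $\max_{\M\in\OO}\min_{(\tilde P,\sigma)\in C}\langle\tilde P,J_\M\rangle=\min_{(\tilde P,\sigma)\in C}\max_{\M\in\OO}\langle\tilde P,J_\M\rangle$. Translating back through the same substitution, $\max_{\M\in\OO}\langle\tilde P,J_\M\rangle=\max_{\M\in\OO}\langle P,\idc\otimes\M(\psi)\rangle$ is exactly the least $\lambda$ with $\langle P,\idc\otimes\M(\psi)\rangle\le\lambda$ for all $\M\in\OO$, which is the first identity.

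For the affine identity I would run the same reduction and then use the representation $\aff(\OO)=\{t\M_1-(t-1)\M_2 : t\ge 1,\ \M_1,\M_2\in\OO\}$, valid because $\OO$ is convex (split an affine combination into its positive and negative parts). For each fixed $t$ the set $\{t\M_1-(t-1)\M_2 : \M_1,\M_2\in\OO\}$ is convex and compact, so Sion gives $\max_\M\min_{(\tilde P,\sigma)\in C}\langle\tilde P,J_\M\rangle=\min_{(\tilde P,\sigma)\in C}[\,t\,M_{\tilde P}-(t-1)m_{\tilde P}\,]$ with $M_{\tilde P}:=\max_{\M\in\OO}\langle\tilde P,J_\M\rangle$ and $m_{\tilde P}:=\min_{\M\in\OO}\langle\tilde P,J_\M\rangle$; crucially $0\le m_{\tilde P}\le M_{\tilde P}$ since $\tilde P\ge 0$ and $J_\M\ge 0$ for $\M\in\OO$. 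Writing $t\,M_{\tilde P}-(t-1)m_{\tilde P}=m_{\tilde P}+t(M_{\tilde P}-m_{\tilde P})$ and taking $\sup_{t\ge 1}$, the nonnegativity of $m_{\tilde P}$ together with compactness of $C$ forces the value to $\min\{M_{\tilde P} : (\tilde P,\sigma)\in C,\ M_{\tilde P}=m_{\tilde P}\}$, i.e.\ the minimum restricted to those $(\tilde P,\sigma)$ for which $\langle\tilde P,J_\M\rangle$ is constant over $\M\in\OO$ --- equivalently over $\aff(\OO)$ --- with $\lambda$ equal to that constant; this is the second identity.

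I expect the main obstacle to be the input-state absorption that convexifies the domain (getting the correspondence with $C$ exactly right, including the rank-deficient case), together with the $t\to\infty$ limiting step needed to accommodate the non-compactness of $\aff(\OO)$ in the second identity; once these are set up, the minimax applications themselves are routine.
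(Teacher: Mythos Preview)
Your proof is correct, and differs from the paper's chiefly in where the convexification happens. The paper applies Sion's theorem directly to the pair $\max_{\M\in\OO}$ versus $\min_{(\psi,P)}$, invoking only the linearity of the objective in $\M$; strictly speaking this is loose, since the set of pure inputs $\psi$ is not convex and the hypotheses of Sion are not literally met. You repair this by first passing to the Choi picture and substituting $(\tilde P,\sigma)$ for $(\psi,P)$: the resulting feasible set $C$ is genuinely convex and compact, so Sion applies cleanly. The paper does introduce exactly this substitution, but only \emph{after} the lemma, as a reformulation of the result rather than as part of the argument --- your ordering is the more careful one. For the affine identity the two routes diverge further: the paper swaps $\sup_{\aff(\OO)}$ with the inner minimisation (again via Sion, noting $\aff(\OO)$ is convex though not compact) and then observes that the inner sup over $\aff(\OO)$ is $+\infty$ unless $\langle P,\idc\otimes\M(\psi)\rangle$ is constant over $\M\in\OO$, while the outer minimum is at most $1$ (take $P=\id$), forcing the equality constraint. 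Your $t$-parameterization of $\aff(\OO)$ followed by a compactness/limit argument reaches the same conclusion and is entirely valid, but more laborious; having already convexified to $C$, you could shorten it by applying Sion once with the convex (non-compact) set $\aff(\OO)$ against the compact convex $C$ and then finishing with the paper's one-line constancy observation.
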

\begin{proof}
By definition, we have that
\begin{align}
  R_{H,\OO}^\epsilon(\E)^{-1} &=  \max_{\M\in \OO} \min_{\psi}\min_{\substack{0\leq P \leq \id\\ \< P, \idc \otimes \E (\psi)\> \geq 1-\epsilon}} \< P, \idc \otimes \M (\psi)\>\\
  &=  \min_{\psi}\min_{\substack{0\leq P \leq \id\\ \< P, \idc \otimes \E (\psi)\> \geq 1-\epsilon}} \max_{\M\in \OO} \< P, \idc \otimes \M (\psi)\>\\
  &= \min_{\psi} \lset \lambda \bar \< P, \idc \otimes \M (\psi)\> \leq \lambda \; \forall \M\in \OO,\; 0\leq P \leq \id,\; \< P, \idc \otimes \E (\psi)\> \geq 1-\epsilon \rset
\end{align}
where in the second line we used the linearity of $\< P, \idc \otimes \M (\psi)\>$ to apply Sion's minimax theorem~\cite{sion_1958}. In the case of $R_{H,\aff(\OO)}^\epsilon$, the theorem applies analogously ($\aff(\OO)$ is no longer compact, in general, but it still is a convex set) and so we get
\begin{align}
  R_{H,\aff(\OO)}^\epsilon(\E)^{-1} &=  \min_{\psi}\min_{\substack{0\leq P \leq \id\\ \< P, \idc \otimes \E (\psi)\> \geq 1-\epsilon}} \sup_{\M\in \aff(\OO)} \< P, \idc \otimes \M (\psi)\>\\
  &= \min_{\psi} \lset \lambda \bar \< P, \idc \otimes \M (\psi)\> = \lambda \; \forall \M\in \OO,\; 0\leq P \leq \id,\; \< P, \idc \otimes \E (\psi)\> \geq 1-\epsilon \rset
\end{align}
The last line follows from a characterization of affine hulls found e.g.\ in~\cite{regula_2020}, and can be seen explicitly as follows. If there exist $\M_1,\M_2\in\OO$ such that $\<P,\idc\otimes\M_1(\psi)\>\neq \<P,\idc\otimes\M_2(\psi)\>$, then the supremum over $\aff(\OO)$ becomes unbounded. 
On the other hand, the whole quantity is upper bounded by 1 as can be checked by taking $P=\id$. 
Thus, optimal $P$ and $\psi$ necessarily satisfy $\<P,\idc \otimes \M(\psi)\>=\lambda,\forall \M\in\OO$ for some constant $\lambda$, also implying $\sup_{\M\in\aff(\OO)}\<P,\idc \otimes \M(\psi)\>=\lambda$.
\end{proof}
Using the above, we can express the hypothesis testing measures as
\begin{equation}\begin{aligned}
    R_{H,\OO}^\epsilon(\E)^{-1} &= \min_{\rho} \lset \lambda \bar \< Q, J_\M \> \leq \lambda \; \forall \M \in \OO, \; 0 \leq Q \leq \rho \otimes \id,\; \< Q, J_\E \> \geq 1-\epsilon\rset,\\
    R_{H,\aff(\OO)}^\epsilon(\E)^{-1} &= \min_{\rho} \lset \lambda \bar \< Q, J_\M \> =\lambda \; \forall \M \in \OO, \; 0 \leq Q \leq \rho \otimes \id,\; \< Q, J_\E \> \geq 1-\epsilon \rset,
\end{aligned}\end{equation}
where the optimization is over all density matrices $\rho$ acting on the input space $\H_{\rm in}$ of dimension $d_{\rm in}$, and $J_\E = d_{\rm in} (\idc \otimes \E)(\psi^+)$, with $\psi^+$ being the maximally entangled state, is the Choi operator of $\E$. 
Here, we used the fact that any input $\psi$ acting on $\H_{\rm in'} \otimes \H_{\rm in}$ can be written as $\psi = (\rho^{1/2} \otimes \id)  \psi^+ (\rho^{1/2} \otimes \id)$ for some density operator $\rho$ ($\psi$ being the canonical purification of $\rho$). 

In addition to the hypothesis testing measures, we will see later that the following measure characterizes the fidelity of distillation:  
\bal
 G_\OO(\E; m) &= \max_{\psi} \lsetr \< \idc \otimes \E (\psi) ,\, W \> \barr 0 \leq W \leq \id, \< \idc \otimes \M (\psi), W \>  \leq \frac{1}{m} \; \forall \M \in \OO \rsetr,
 \label{eq:G measure def}
\eal
which is parameterized by $m \in \RR$, and we also define $G_{\aff(\OO)}(\E; m)$ analogously. 
In a similar manner, we can also express the measures $G_{\OO}$ and $G_{\aff(\OO)}$ as
\begin{equation}\begin{aligned}
	    G_\OO(\E; m) &= \max_{\rho} \lsetr \< W, J_\E \> \barr 0 \leq W \leq \rho \otimes \id, \;\< W, J_\M \> \leq \frac{1}{m} \; \forall \M \in \OO \rsetr,\\
	    G_{\aff{\OO}}(\E; m) &= \sup_{\rho} \lsetr \< W, J_\E \> \barr 0 \leq W \leq \rho \otimes \id, \;\< W, J_\M \> = \frac{1}{m} \; \forall \M \in \OO \rsetr.
\end{aligned}\end{equation}
To more explicitly see the computability of these problems, let us define the dual cones
\begin{equation}\begin{aligned}
  \OO\* &\coloneqq \lset X \bar \< X, J_\M \> \geq 0 \; \forall \M \in \OO \rset\\
  \aff(\OO)\* &\coloneqq \lset X \bar \< X, J_\M \> = 0 \; \forall \M \in \OO \rset,\\
\end{aligned}\end{equation}
using which we can write
\begin{equation}\begin{aligned}
    R_{H,\OO}^\epsilon(\E)^{-1} &= \min_{\rho} \lset \lambda \bar \lambda \frac{\id}{d_A} - Q \in \OO\*,\; 0 \leq Q \leq \rho \otimes \id,\; \< Q, J_\E \> \geq 1-\epsilon \rset,\\
    G_\OO(\E; m) &= \max_{\rho} \lsetr \< W, J_\E \> \barr 0 \leq W \leq \rho \otimes \id, \; \frac{\id}{d_A} - m W \in \OO\* \rsetr,\\
\end{aligned}\end{equation}
and analogously for $\aff(\OO)$. Whenever the dual cone of operators $\OO\*$ (respectively, $\aff(\OO)\*$) is characterized by semidefinite constraints, the programs $R_{H,\OO}^\epsilon$ and $G_{\OO}$ (respectively, $R_{H,\aff(\OO)}^\epsilon$ and $G_{\aff(\OO)}$) are SDPs, allowing for an efficient evaluation in relevant cases. A similar characterization can be extended to related settings, such as when instead of optimizing over a set of channels $\OO$ one optimizes over a larger set, as is the case when dealing with PPT maps~\cite{Leung2015NS,wang_2019-3}.

We note also that, for $\epsilon = 0$, the hypothesis testing relative entropy $R_{H,\OO}^\epsilon$ reduces to an optimization based on the min-relative entropy~\cite{datta_2009}, given by
\begin{equation}\begin{aligned}
    R_{\min,\OO}(\E)^{-1} &= \min_{\psi} \max_{\M\in \OO} \< \Pi_{\idc \otimes \E(\psi)}, \idc \otimes \M (\psi)\>\\
\end{aligned}\end{equation}
where $\Pi_{\rho}$ denotes the projection onto the support of $\rho$. Of interest to us will be the case when $\idc \otimes \N(\psi)$ is a pure state for any pure input state, in which case
\begin{equation}\begin{aligned}
	R_{\min,\OO}(\N)^{-1} &= \min_{\psi} \lset \lambda \bar \< \idc \otimes \N(\psi), \idc \otimes \M (\psi)\> \leq \lambda \; \forall \M\in \OO \rset
\end{aligned}\end{equation}
We analogously define the min-relative entropy with respect to $\aff(\OO)$, given by
\begin{equation}\begin{aligned}
	R_{\min,\aff(\OO)}(\N)^{-1} &= \inf_{\psi} \lset \lambda \bar \< \idc \otimes \N(\psi), \idc \otimes \M (\psi)\> = \lambda \; \forall \M\in \OO \rset
\end{aligned}\end{equation}
when $\idc \otimes \N(\psi)$ is pure, where we take $\inf \emptyset = \infty$.


\section{Proofs of the main results}\label{app:proofs}

{
\renewcommand\thetheorem{1}
\begin{theorem}
Let $\E\in\OO_{\rm all}$ and $\N\in\OO_{\rm all}'$.
If there exists a free superchannel $\Theta\in\SS$ such that $F(\Theta(\E),\N)\geq 1-\epsilon$, then for any monotone $\mathfrak{R}_\OO$ it holds that $\mathfrak{R}_{\OO}(\E)\geq \mathfrak{R}_{\OO'}^\epsilon(\N)$, as well as that $\mathfrak{R}_{\OO}^\delta(\E)\geq \mathfrak{R}_{\OO'}^{2(\sqrt{\delta}+\sqrt{\epsilon})}(\N)$ for any $0\leq \delta \leq 1$ where $\mfR_\OO^\epsilon(\E)\coloneqq \min\{\mfR_\OO(\E')\,|\,F(\E',\E)\geq 1-\epsilon,\,\E'\in\OO_{\rm all}\}$.

Conversely, for any choices of $\epsilon, \delta \geq 0$ such that $\epsilon + 2\delta < 1$, there exists a free superchannel $\Theta\in\SS$ such that $F(\Theta(\E),\N)\geq 1 -\epsilon - 2\delta$ if $R_{H,\OO}^\delta(\E)\geq R_{s,\OO'}^\epsilon(\N)$ or if $R_{H,\aff(\OO)}^\delta(\E)\geq R_{\max,\OO'}^\epsilon(\N)$.
\end{theorem}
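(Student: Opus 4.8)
The plan is to treat the two directions separately: the necessity part uses only the monotonicity of $\mfR_\OO$ under free superchannels together with a data-processing inequality for the worst-case fidelity, while the achievability part is carried out via a single explicit measure-and-prepare superchannel handling both stated conditions.

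\emph{Necessity.} I would use two facts: every monotone $\mfR_\OO$ is nonincreasing under free superchannels, and $\mfR_{\OO'}^\epsilon(\N)$ is by definition the minimum of $\mfR_{\OO'}$ over the channels in $\OO_{\rm all}'$ lying within fidelity $1-\epsilon$ of $\N$. If $\Theta\in\SS$ satisfies $F(\Theta(\E),\N)\geq 1-\epsilon$, then $\Theta(\E)$ is one such channel, so $\mfR_\OO(\E)\geq\mfR_{\OO'}(\Theta(\E))\geq\mfR_{\OO'}^\epsilon(\N)$. For the smoothed inequality I would pick $\E''\in\OO_{\rm all}$ attaining $\mfR_\OO^\delta(\E)$ and push it through $\Theta$: data processing of the worst-case fidelity under the superchannel (obtained by writing $\Theta$ as pre-processing, an application of the input channel, and post-processing, and using that the worst-case fidelity is already stabilized over an input-sized reference) gives $F(\Theta(\E''),\Theta(\E))\geq F(\E'',\E)\geq 1-\delta$; combining this with $F(\Theta(\E),\N)\geq 1-\epsilon$ via the triangle inequality for the channel purified distance $\sqrt{1-F}$ yields $F(\Theta(\E''),\N)\geq 1-(\sqrt{\delta}+\sqrt{\epsilon})^2\geq 1-2(\sqrt{\delta}+\sqrt{\epsilon})$, and monotonicity then gives $\mfR_\OO^\delta(\E)\geq\mfR_{\OO'}(\Theta(\E''))\geq\mfR_{\OO'}^{2(\sqrt{\delta}+\sqrt{\epsilon})}(\N)$.

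\emph{Achievability.} Fix $\N'\in\OO_{\rm all}'$ with $F(\N',\N)\geq 1-\epsilon$ attaining $R_{s,\OO'}^\epsilon(\N)=1+s$ (resp.\ $R_{\max,\OO'}^\epsilon(\N)=1+s$), so that $\N'=(1+s)\mathcal{F}-s\,\M'$ with $\mathcal{F}\in\OO'$ and $\M'\in\OO'$ (resp.\ $\M'\in\OO_{\rm all}'$). By the minimax identity of Lemma~\ref{lem:minimax}, choose a pure input $\psi$ and $0\leq P\leq\id$ with $\<P,\idc\otimes\E(\psi)\>\geq 1-\delta$ and, for all $\M\in\OO$, $\<P,\idc\otimes\M(\psi)\>\leq\lambda$ (resp.\ $=\lambda$), where $\lambda=R_{H,\OO}^\delta(\E)^{-1}$ (resp.\ $\lambda=R_{H,\aff(\OO)}^\delta(\E)^{-1}$) satisfies $\lambda\leq(1+s)^{-1}$ by hypothesis. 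Define $\Theta$ by the realization ``feed $\psi$ to the input channel while keeping the genuine input $\xi$ in a memory, measure the channel's output with $\{P,\id-P\}$, and on outcome $P$ apply $\N'$ to $\xi$, on outcome $\id-P$ apply a fixed channel $\mathcal{K}$ to $\xi$'', i.e.
\begin{equation}
\Theta(\mathcal{M})(\xi)=\<P,\idc\otimes\mathcal{M}(\psi)\>\,\N'(\xi)+\bigl(1-\<P,\idc\otimes\mathcal{M}(\psi)\>\bigr)\,\mathcal{K}(\xi),
\end{equation}
which is a valid superchannel for any channel $\mathcal{K}$. In the standard case take $\mathcal{K}=\M'\in\OO'$; in the affine case take $\mathcal{K}=\tfrac{\lambda s}{1-\lambda}\,\M'+\tfrac{1-\lambda(1+s)}{1-\lambda}\,\mathcal{L}$ with $\mathcal{L}\in\OO'$ arbitrary, which is a channel since the two weights are nonnegative (using $\lambda(1+s)\leq 1$) and sum to one. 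Expanding $\N'$, one checks that for every $\M\in\OO$ the output $\Theta(\M)$ equals $p_\M(1+s)\mathcal{F}+(1-p_\M(1+s))\M'$ with $p_\M:=\<P,\idc\otimes\M(\psi)\>\leq\lambda$ (standard case), or $\lambda(1+s)\mathcal{F}+(1-\lambda(1+s))\mathcal{L}$ (affine case), hence is a convex combination of free channels and lies in $\OO'$, so $\Theta\in\SS$. Finally, since $\<P,\idc\otimes\E(\psi)\>\geq 1-\delta$, joint concavity of the root fidelity gives $F(\Theta(\E),\N)\geq(1-\delta)^2 F(\N',\N)\geq(1-\delta)^2(1-\epsilon)\geq 1-\epsilon-2\delta$.

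\emph{Main obstacle.} The delicate step is the affine condition: there the generalized-robustness decomposition of the smoothed target contains a noise channel $\M'$ that is generally not free, so a naive accept/reject superchannel need not map $\OO$ into $\OO'$. The resolution exploits that the hypothesis-testing test value is exactly constant on $\OO$ (equivalently on $\aff(\OO)$), which lets the resourceful $\M'$-component be absorbed into the reject-branch channel $\mathcal{K}$ so that it cancels precisely on free inputs; getting the weights of $\mathcal{K}$ right and verifying that it is a genuine channel and that all convex combinations used remain in $\OO'$ (resp.\ $\OO_{\rm all}'$) is the part requiring care. A secondary technical ingredient, used only for the smoothed necessity bound, is the data-processing inequality for the stabilized worst-case fidelity under arbitrary superchannels.
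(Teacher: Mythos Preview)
Your proposal is correct and follows essentially the same approach as the paper. The only cosmetic differences are that for the smoothed necessity bound you invoke the triangle inequality for the channel purified distance $\sqrt{1-F}$ directly (the paper instead detours through the Fuchs--van de Graaf inequalities and the diamond-norm triangle inequality, arriving at the same constant), and in the affine achievability case you write the reject channel explicitly as $\K=\frac{\lambda s}{1-\lambda}\M'+\frac{1-\lambda(1+s)}{1-\lambda}\L$, whereas the paper argues abstractly that a suitable noise channel $\Q$ exists because the generalized robustness admits a decomposition at any value $\geq R_{\max,\OO'}(\tilde\N)$ --- unpacking that existence argument yields exactly your $\K$ with the choice $\L=\F$.
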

\addtocounter{theorem}{-1}
}
\begin{proof}

To show the first part, we note that the smoothed version of any monotone $\mfR_\OO$ also serves as a  monotone because for any $\Theta\in\SS$, it holds that
\bal
 \mfR_\OO^\epsilon(\E) = \mfR_\OO(\E')\geq \mfR_{\OO'}(\Theta(\E'))\geq \mfR_{\OO'}^\epsilon(\Theta(\E)) 
 \label{eq:monotonicity smooth measure}
\eal
where in the equality we defined $\E'\in\OO_{\rm all}$ as a channel such that $F(\E',\E)\geq 1-\epsilon$ and $\mfR_\OO^\epsilon(\E) = \mfR_\OO(\E')$, and in the first inequality we used the monotonicity of $\mfR_\OO$.
The last inequality is obtained by noting $F(\Theta(\E'),\Theta(\E))\geq F(\E',\E)\geq 1-\epsilon$ due to the data processing inequality of the fidelity under any superchannel~\cite{belavkin_2005,gilchrist_2005}, and by using the definition of the smooth measure. 

Now, suppose $\N'$ is a channel such that $F(\N',\Theta(\E))\geq 1-\delta$ and $\mathfrak{R}_{\OO'}^\delta(\Theta(\E))=\mathfrak{R}_{\OO'}(\N')$. 
Then, we use the relation between the fidelity and diamond norm distance~\cite{fuchs_1999} to get 
\bal
1-F(\N,\N')\leq 1-\left(1-\frac{1}{2}\|\N-\N'\|_\diamond\right)^2\leq \|\N-\N'\|_\diamond &\leq \|\N-\Theta(\E)\|_\diamond+\|\Theta(\E)-\N'\|_\diamond \\
&\leq 2\left(\sqrt{1-F(\N,\Theta(\E))}+\sqrt{1-F(\Theta(\E),\N')}\right) \\
& \leq 2(\sqrt{\epsilon}+\sqrt{\delta}),
\eal
implying that $\mathfrak{R}_{\OO'}(\N')\geq \mathfrak{R}_{\OO'}^{2(\sqrt{\delta}+\sqrt{\epsilon})}(\N)$.
The statement is shown by combining this with \eqref{eq:monotonicity smooth measure}.
When $\delta=0$, $\N'$ coincides with $\Theta(\E)$. Using the assumption that $1-F(\Theta(\E),\N)\leq \epsilon$, we get $\mathfrak{R}_{\OO'}(\N')\geq \mathfrak{R}_{\OO'}^\epsilon(\N)$, resulting in the improved bound.

Let us now show the second part. We start with the condition using $R_{H,\OO}$. 
Let $\tilde\N\in\OO_{\rm all}'$ be a channel such that $F(\tilde\N,\N)\geq 1-\epsilon$ and $R_{s,\OO'}^\epsilon(\N)=R_{s,\OO'}(\tilde\N)$. 
Then, there exists channel $\Q\in\OO'$ such that $\tilde \N + (R_{s,\OO'}^\epsilon(\N)-1)\Q \in R_{s,\OO'}^\epsilon(\N)\OO'$.
Let us also take $P^\star$ and $\psi^\star$ as the optimal operators achieving the optimization in the hypothesis testing entropy measure for $\E$ as $R_{H,\OO}^\delta(\E)=\min_{\M\in\OO}\<P^\star,\idc\otimes\M(\psi^\star)\>^{-1}$ with $0\leq P^\star \leq \id$ and $\<P^\star,\idc\otimes\E(\psi^\star)\>\geq 1-\delta$ (recall Lemma~\ref{lem:minimax}). 
Then, we define the following map:
 \bal
  \Theta(\L)\coloneqq \<P^\star,\idc\otimes\L(\psi^\star)\>\tilde\N +\<\id-P^\star,\idc\otimes\L(\psi^\star)\>\Q. 
  \label{eq:full dimensional free superchannel}
 \eal
The fact that this is a valid superchannel follows since $\Theta$ consists of the preprocessing channel $\omega \mapsto \omega \otimes \psi^\star$,
followed by an application of $\idc \otimes \L$, and the postprocessing channel
\begin{equation}\begin{aligned}
  \omega' \mapsto &\left[ \tilde\N \otimes \< P^\star , \cdot \> \right] (\omega') \\
  + & \left[ \Q \otimes \< \id - P^\star , \cdot \> \right] (\omega'),
\end{aligned}\end{equation}
where the latter can be seen to be a valid CPTP map since $\{P^\star,\id-P^\star\}$ defines a valid POVM. Now, since $\tilde\N,\Q\in\OO_{\rm all}$, it further holds that $\Theta\in\SS_{\rm all}$. 
One can also check that $\Theta\in\SS$ by noting that for any $\M\in\OO$, it holds that $\<P^\star,\idc\otimes\M(\psi^\star)\>\leq R_{H,\OO}^\delta(\E)^{-1}\leq R_{s,\OO'}^\epsilon(\N)^{-1}= R_{s,\OO'}(\tilde\N)^{-1}$, ensuring $\Theta(\M)\in\OO'$.
Using the concavity of the (square root of) fidelity, we obtain 
\bal
 \sqrt{F(\Theta(\E),\N)} &\geq \<P^\star,\idc\otimes\E(\psi^\star)\>\sqrt{F(\tilde\N,\N)} +\<\id-P^\star,\idc\otimes\E(\psi^\star)\>\sqrt{F(\Q,\N)}\\
 &\geq (1-\delta)\sqrt{1-\epsilon},
\eal
which leads to $F(\Theta(\E),\N)\geq(1-\delta)^2(1-\epsilon)\geq 1-\epsilon-2\delta$.

For the condition using $R_{H,\aff(\OO)}$, recall from Lemma~\ref{lem:minimax} that
\begin{align}
  R_{H,\aff(\OO)}^\delta(\E)^{-1} &= \min \Big\{ \lambda \;\Big|\; \< P, \idc \otimes \M (\psi)\> = \lambda \; \forall \M\in \OO, \\
  &\hphantom{= \min \Big\{ \lambda \;\Big|\;} 0\leq P \leq \id,\ \< P, \idc \otimes \E (\psi)\> \geq 1-\delta \Big\}.
  \label{eq:hypothesis_ct_general_input}
\end{align}
Then, let $P^\star$ and $\psi^\star$ be the optimal $P$ and $\psi$ in the above, and $\tilde\N\in\OO_{\rm all}'$ be a channel such that $F(\tilde\N,\N)\geq 1-\epsilon$ and $R_{\max,\OO'}^\epsilon(\N)=R_{\max,\OO'}(\tilde\N)$.
Since $R_{\max,\OO}^\epsilon (\N) \leq R_{H,\aff(\OO)}^\delta (\E)$, there exists a channel $\Q\in\OO_{\rm all}'$ such that $\tilde\N  + (R_{H,\aff(\OO)}^\delta(\E)-1) \, \Q \in R_{H,\aff(\OO)}^\delta(\E) \tilde\OO'$, which allows us to construct the superchannel
\begin{equation}\begin{aligned}
  \Theta(\L) \coloneqq \< P^\star, \idc \otimes \L(\psi^\star) \> \tilde\N + \< \id - P^\star, \idc \otimes \L(\psi^\star)\> \Q.
\end{aligned}\end{equation}
Now, since $\< P^\star, \idc \otimes \M(\psi^\star)\> = R_{H,\aff(\OO)}^\delta(\E)^{-1}$ for any $\M\in \OO$ due to the conditions in \eqref{eq:hypothesis_ct_general_input}, we have that
\begin{equation}\begin{aligned}
  \Theta(\M) &= R_{H,\aff(\OO)}^\delta(\E)^{-1} \tilde\N + (1 - R_{H,\aff(\OO)}^\delta(\E)^{-1} ) \Q\\
  &\propto \tilde\N + (R_{H,\aff(\OO)}^\delta(\E) - 1) \, \Q \in \OO'
\end{aligned}\end{equation}
and so $\Theta \in \SS$.
Noticing that $F(\Theta(\E), \N) \geq 1-\epsilon-2\delta$ analogously as in the previous case completes the proof.

\end{proof}

\begin{remark}
The quantity $R_{H,\OO}^\delta$ reduces to $R_{\min,\OO}$ when $\delta = 0$, as we mentioned before. However, for $R_{H,\aff(\OO)}^\delta$ and $R_{\min,\aff(\OO)}$, there is a potential gap between the two: we always have that $R_{H,\aff(\OO)}^\delta (\E) \geq 1$ for any $\delta$, but the quantity $R_{\min,\aff(\OO)}$ can be equal to 0 when there is no input state $\psi$ such that that $\<\Pi_{\idc \otimes \E(\psi)}, \idc \otimes \M(\psi) \>$ is constant for all $\M \in \OO$. Interestingly, either $R_{H,\aff(\OO)}^0$ or $R_{\min,\aff(\OO)}$ can be used in the proof in the case $\delta=0$ (in the sense that we can require either $R_{H,\aff(\OO)}^0(\E)\geq R_{\max,\OO'}^\epsilon(\N)$ or $R_{\min,\aff(\OO)} (\E) \geq R_{\max,\OO'}^\epsilon(\N)$), but using the hypothesis testing relative entropy provides a potentially more robust condition.
\end{remark}

{
\renewcommand\thetheorem{3}
\begin{theorem}
Let $\E\in\OO_{\rm all}$, $\N\in\OO_{\rm all}'$ and suppose $\idc\otimes\N(\psi)$ is pure for any pure state $\psi$.
If there exists a free superchannel $\Theta\in\SS$ such that $F(\Theta(\E),\N)\geq 1-\epsilon$, then it holds that $R_{H,\OO}^\epsilon(\E)\geq R_{\min,\OO'}(\N)$ and $R_{H,\aff(\OO)}^\epsilon(\E)\geq R_{\min,\aff(\OO')}(\N)$.

Conversely, there exists a free superchannel $\Theta\in\SS$ such that $F(\Theta(\E),\N)\geq 1 - \epsilon -\delta$ if $R_{H,\OO}^\epsilon(\E)\geq R_{s,\OO'}^\delta(\N)$ or if $R_{H,\aff(\OO)}^\epsilon(\E)\geq R_{\max,\OO'}^\delta(\N)$.
\end{theorem}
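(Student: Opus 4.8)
\emph{Converse.} The plan is to leverage the purity of $\N$ to collapse the fidelity constraint into a linear one and then read everything off Lemma~\ref{lem:minimax}. Since the root fidelity is jointly concave, the worst-case fidelity $F(\cdot,\cdot)$ is attained on pure inputs, so purity of $\idc\otimes\N$ on pure states turns $F(\Theta(\E),\N)\geq 1-\epsilon$ into $\<\idc\otimes\N(\psi),\idc\otimes\Theta(\E)(\psi)\>\geq 1-\epsilon$ for every pure $\psi$. Since $R_{H,\OO}^\epsilon$ (respectively $R_{H,\aff(\OO)}^\epsilon$) is a monotone --- monotonicity under $\Theta\in\SS$ follows by dilating $\Theta$ into pre-/post-processing channels and invoking data processing of the hypothesis testing relative entropy --- the first part of Theorem~\ref{thm:transformation general} applies and gives $R_{H,\OO}^\epsilon(\E)\geq\min\{R_{H,\OO'}^\epsilon(\N')\,|\,F(\N',\N)\geq 1-\epsilon,\,\N'\in\OO_{\rm all}'\}$. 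It then suffices to prove the auxiliary claim that $R_{H,\OO'}^\epsilon(\F)\geq R_{\min,\OO'}(\N)$ whenever $F(\F,\N)\geq 1-\epsilon$ and $\idc\otimes\N$ preserves purity: I would take $\psi^\star$ optimal in $R_{\min,\OO'}(\N)^{-1}=\min_\psi\max_{\M'\in\OO'}\<\idc\otimes\N(\psi),\idc\otimes\M'(\psi)\>$ and plug the rank-one projector $P\coloneqq\idc\otimes\N(\psi^\star)$ (so $0\leq P\leq\id$) together with the input $\psi^\star$ into the minimax expression for $R_{H,\OO'}^\epsilon(\F)^{-1}$ from Lemma~\ref{lem:minimax}; the linearised fidelity bound certifies $\<P,\idc\otimes\F(\psi^\star)\>\geq 1-\epsilon$, while optimality of $\psi^\star$ gives $\<P,\idc\otimes\M'(\psi^\star)\>\leq R_{\min,\OO'}(\N)^{-1}$ for all $\M'\in\OO'$, whence $R_{H,\OO'}^\epsilon(\F)^{-1}\leq R_{\min,\OO'}(\N)^{-1}$. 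The affine version is the same argument with the inequality constraints replaced by equality (affine) constraints, using linearity of superchannels so that $\Theta(\aff(\OO))\subseteq\aff(\OO')$; the only subtlety is that if no pure $\psi$ makes $\<\idc\otimes\N(\psi),\idc\otimes\M'(\psi)\>$ constant over $\OO'$, then $R_{\min,\aff(\OO')}(\N)=0$ and the bound holds trivially since $R_{H,\aff(\OO)}^\epsilon\geq 1$. (Alternatively one can skip the monotonicity step and argue the converse directly by pushing $P=\idc\otimes\N(\psi^\star)$ back through the Heisenberg-adjoint of the post-processing of $\Theta$ and the pre-processing through $\psi^\star$, then purifying and compressing the reference.)

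\emph{Achievability.} Here I would reuse \emph{verbatim} the free superchannel built in the proof of Theorem~\ref{thm:transformation general}, exploiting purity only to sharpen the final estimate. Assuming $R_{H,\OO}^\epsilon(\E)\geq R_{s,\OO'}^\delta(\N)$ (the case $\epsilon+\delta\geq 1$ being vacuous), pick $P^\star,\psi^\star$ optimal for $R_{H,\OO}^\epsilon(\E)$, a channel $\tilde\N\in\OO_{\rm all}'$ with $F(\tilde\N,\N)\geq 1-\delta$ and $R_{s,\OO'}(\tilde\N)=R_{s,\OO'}^\delta(\N)$, a witness $\Q\in\OO'$ for $R_{s,\OO'}(\tilde\N)$, and set $\Theta(\L)\coloneqq\<P^\star,\idc\otimes\L(\psi^\star)\>\tilde\N+\<\id-P^\star,\idc\otimes\L(\psi^\star)\>\Q$. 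That $\Theta\in\SS$ is verified exactly as in Theorem~\ref{thm:transformation general} (using only $\<P^\star,\idc\otimes\M(\psi^\star)\>\leq R_{H,\OO}^\epsilon(\E)^{-1}\leq R_{s,\OO'}(\tilde\N)^{-1}$ for $\M\in\OO$ together with convexity of $\OO'$). Writing $p\coloneqq\<P^\star,\idc\otimes\E(\psi^\star)\>\geq 1-\epsilon$, so that $\Theta(\E)=p\,\tilde\N+(1-p)\Q$, purity of $\N$ now gives
\begin{equation}\begin{aligned}
F(\Theta(\E),\N)&=\min_\psi\<\idc\otimes\N(\psi),\idc\otimes\Theta(\E)(\psi)\>\\
&\geq p\min_\psi\<\idc\otimes\N(\psi),\idc\otimes\tilde\N(\psi)\>\\
&=p\,F(\tilde\N,\N)\geq(1-\epsilon)(1-\delta)\geq 1-\epsilon-\delta,
\end{aligned}\end{equation}
where dropping the non-negative $\Q$-term replaces the root-fidelity concavity step of Theorem~\ref{thm:transformation general} and thus improves the $1-2\epsilon-\delta$ bound that that argument would give here to $1-\epsilon-\delta$. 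The $R_{H,\aff(\OO)}^\epsilon$/$R_{\max,\OO'}^\delta$ variant is identical, using the affine construction from Theorem~\ref{thm:transformation general} (now $\<P^\star,\idc\otimes\M(\psi^\star)\>$ is constant over $\M\in\OO$, which is what makes $\Theta$ free) followed by the same pure-target estimate.

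\emph{Main obstacle.} Most of the argument is bookkeeping on top of Lemma~\ref{lem:minimax} and the construction in Theorem~\ref{thm:transformation general}. The genuinely new ingredients are (i) realising, in the converse, that the rank-one projector onto the target output state is the feasible test which certifies $R_{H,\OO}^\epsilon(\E)\geq R_{\min,\OO'}(\N)$ with \emph{no} residual smoothing on the right-hand side, and (ii) isolating the single step --- linearity of the overlap with a pure state --- where purity of $\N$ is what buys the improved error $\epsilon+\delta$. I expect the affine converse to be the fiddliest piece, since $R_{\min,\aff(\OO')}$ may vanish and the affine minimax characterisations must be carried through with equality constraints, but this is routine once the corner case is handled as above.
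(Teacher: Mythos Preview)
Your proposal is correct and follows essentially the same approach as the paper. For the converse, the paper applies the data-processing inequality for $R_H^\epsilon$ directly to obtain $R_{H,\OO}^\epsilon(\E)\geq R_{H,\OO'}^\epsilon(\Theta(\E))$ and then plugs in the rank-one test $P=\idc\otimes\N(\psi')$ (optimizing over $\psi'$) --- exactly your ``auxiliary claim'' with $\F=\Theta(\E)$; your detour through the smoothed-monotone statement of Theorem~\ref{thm:transformation general} is harmless but unnecessary, since data processing already lands you at a single $\F$. For the achievability, the paper does precisely what you outline: it reuses the superchannel from Theorem~\ref{thm:transformation general} and, using purity of $\N$, linearizes the fidelity and drops the nonnegative $\Q$-contribution to get $F(\Theta(\E),\N)\geq(1-\epsilon)(1-\delta)\geq 1-\epsilon-\delta$, with the affine case handled identically.
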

\addtocounter{theorem}{-1}
}
\begin{proof}

Let $\tilde\N$ be a channel such that $\Theta(\E) = \tilde\N$. 
Using the data processing inequality for hypothesis testing relative entropy~\cite{wang_2012,yuan_2019,gour_2019}, we get
\begin{equation}\begin{aligned}
  R_{H,\OO}^\epsilon (\E) &= \min_{\M\in \OO}  R^\epsilon_H(\E \| \M)\\
  &\geq \min_{\M\in \OO} R^\epsilon_H (\Theta(\E) \| \Theta(\M))\\
  &\geq \min_{\M' \in \OO'} R^\epsilon_H (\tilde\N \| \M')\\
  &= \min_{\M' \in \OO'}  \max_{\psi'} R^\epsilon_H (\idc \otimes \tilde\N (\psi') \| \idc \otimes \M' (\psi'))\\
  &= \min_{\M' \in \OO'}  \max_{\psi'} \max_{\substack{0 \leq P \leq \id\\\< P, \idc \otimes \tilde\N(\psi') \> \geq 1-\epsilon}} \<P, \idc \otimes\M' (\psi') \>^{-1}.
\end{aligned}\end{equation}
Using the assumption on $\N$, for all pure states $\psi'$ it holds that 
\bal
\< \idc \otimes \N(\psi'), \idc \otimes \tilde\N(\psi') \> \geq \min_{\psi'}\< \idc \otimes \N(\psi'), \idc \otimes \tilde\N(\psi') \> = F(\N, \tilde\N) \geq 1-\epsilon
\eal
and so we can always choose $P = \idc \otimes \N(\psi')$ as a feasible solution in the innermost maximization above. We thus get
\begin{equation}\begin{aligned}
  R_{H,\OO}^\epsilon (\E) &\geq \min_{\M' \in \OO'}  \max_{\psi'} \<\idc \otimes \N (\psi'), \idc \otimes\M' (\psi') \>^{-1}\\
  &= R_{\min,\OO'}(\N).
\end{aligned}\end{equation}

The proof for the condition using $R_{H,\aff(\OO)}$ follows in the same way since $R_{H}^\epsilon$ satisfies the data processing inequality even when the second argument is not a CP map (this is due to the fact that any feasible solution for $R^\epsilon _H (\E \| \M)^{-1}$ provides a feasible solution for $R^\epsilon_H (\Theta(\E) \| \Theta(\M))^{-1}$ regardless of what $\M$ is, and the objective function of $R^\epsilon_H (\E \| \M)^{-1}$ is linear in $\M$).

To show the second part, let $\tilde\N\in\OO_{\rm all}'$ be the channel such that $F(\tilde\N,\N)\geq1-\delta$ and $R_{s,\OO'}^\delta(\N)=R_{s,\OO'}(\tilde\N)$.
Then, the construction in \eqref{eq:full dimensional free superchannel} as well as the purity of $\idc \otimes \N(\psi)$ gives that 
\bal
F(\Theta(\E),\N)&= \min_{\psi} \<\idc\otimes\Theta(\E)(\psi),\idc\otimes\N(\psi)\>\\
&\geq\left<P^\star,\idc\otimes\E(\psi^\star)\right>\min_{\psi} \<\idc\otimes\tilde\N(\psi),\idc\otimes\N(\psi)\>\\
&\geq (1-\epsilon)(1-\delta)\\
&\geq 1-\epsilon-\delta.
\eal
The case of $R_{H,\aff(\OO)}$ can be shown analogously.

\end{proof}


\section{Remarks on distillable resource and resource cost}

In the main text, we defined the notions of one-shot distillable resource and resource cost as
\bal
  d^\epsilon_{\OO} (\E)\! &\coloneqq \sup \lset \mathfrak{R}_{\OO'}(\T) \!\bar\! F(\Theta(\E), \T) \geq 1\!-\!\epsilon,\; \T\! \in \TT,\; \Theta \in \SS \rset\\
  c^\epsilon_{\OO} (\E)\! &\coloneqq \inf \lset \mathfrak{R}_{\OO'}(\T) \!\bar\! F(\E, \Theta(\T)) \geq 1\!-\!\epsilon,\; \T\! \in \TT,\; \Theta \in \SS \rset
\eal
where $\TT$ is some given set of ``target channels'' and $\mathfrak{R}_{\OO'}$ is a choice of a monotone (that we fixed as $R_{\min,\OO'}$ for simplicity). In many practical settings, the set of targets is typically chosen simply as
\begin{equation}\begin{aligned}
  \TT = \lset \T^{\otimes m} \bar m \in \mathbb{N} \rset
\end{aligned}\end{equation}
for some fixed channel $\T$, such as a qubit resourceful channel. The quantities $d^\epsilon_\OO$ and $c^\epsilon_{\OO}$ then reduce to asking: what is the largest number of copies of $\T$ that can be distilled from $\E$, and what is the least number of copies of $\T$ required to synthesize $\E$ with free transformations? This corresponds exactly, for instance, to the familiar notion of distillable entanglement and entanglement cost known from entanglement theory.

Although we allow more general notions of target channels, it is clear that the set $\TT$ need not be a continuous set, which is precisely why we defined the notation
\begin{equation}\begin{aligned}
  \left\lfloor x \right\rfloor_{\TT} &\coloneqq \sup \lset \mathfrak{R}_{\OO'}(\T) \bar \mathfrak{R}_{\OO'}(\T) \leq x \rset\\
  \left\lceil x \right\rceil_{\TT} &\coloneqq \inf \lset \mathfrak{R}_{\OO'}(\T) \bar \mathfrak{R}_{\OO'}(\T) \geq x \rset.
\end{aligned}\end{equation}
This allows us to express the notions of distillable resource and resource cost in a concise form, and it reduces to the usual notions of floor $\lfloor\cdot\rfloor$ and ceiling $\lceil\cdot\rceil$ functions when $\mathfrak{R}_{\OO'}(\T)$ only takes values in $\mathbb{N}$.

\section{Details of the fidelity of distillation}

Here, we provide results and detailed discussions on the characterization of the maximal achievable fidelity, providing practical benchmarks on the distillation error. 
Recalling the definitions of $G_\OO$ and $G_{\aff(\OO)}$ introduced in \eqref{eq:G measure def}, we have the following result.

\begin{theorem} \label{thm:fidelity_distillation}
For a channel $\N\in\OO_{\rm all}'$, suppose that $\idc\otimes\N(\psi)$ is pure for any pure state $\psi$. Then, for any channel $\E\in\OO_{\rm all}$, it holds that  
\bal
 G_\OO(\E; R_{\min,\OO'}(\N)) \geq \max_{\Theta \in \SS} F(\Theta(\E), \N) \geq  G_\OO(\E; R_{s,\OO'}(\N))
 \eal
 and
 \bal
 G_{\aff(\OO)}(\E; R_{\min,\aff(\OO')}(\N)) &\geq \max_{\Theta \in \SS} F(\Theta(\E), \N)\geq  G_{\aff(\OO)}(\E; R_{\max,\OO'}(\N)).
 \eal
\end{theorem}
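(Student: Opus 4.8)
The plan is to prove the upper and lower bounds separately, in both the $\OO$ and $\aff(\OO)$ versions, reusing the machinery already developed. For the \textbf{lower bounds}, I would simply exhibit an explicit free superchannel and lower-bound the fidelity it achieves. Given a feasible pair $(\psi^\star, W)$ for the optimization defining $G_\OO(\E; R_{s,\OO'}(\N))$, i.e.\ $0 \leq W \leq \id$ with $\<\idc\otimes\M(\psi^\star), W\> \leq R_{s,\OO'}(\N)^{-1}$ for all $\M\in\OO$, I would pick $\tilde\N = \N$ and a channel $\Q\in\OO'$ witnessing $\N + (R_{s,\OO'}(\N)-1)\Q \in R_{s,\OO'}(\N)\OO'$, and define $\Theta(\L) \coloneqq \<W, \idc\otimes\L(\psi^\star)\>\N + \<\id - W, \idc\otimes\L(\psi^\star)\>\Q$, exactly as in the measure-and-prepare construction \eqref{eq:full dimensional free superchannel} of Theorem 1 (with $W$ playing the role of $P^\star$). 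The POVM $\{W, \id-W\}$ is valid since $0\leq W\leq\id$, and $\Theta\in\SS$ follows from the constraint $\<\idc\otimes\M(\psi^\star), W\> \leq R_{s,\OO'}(\N)^{-1}$, precisely as in Theorem 1's proof. Using purity of $\idc\otimes\N(\psi)$, $F(\Theta(\E),\N) = \min_\psi \<\idc\otimes\Theta(\E)(\psi), \idc\otimes\N(\psi)\> \geq \<W, \idc\otimes\E(\psi^\star)\> \cdot 1$, which equals the objective value of the $G_\OO$ problem; taking the optimal feasible point gives $\max_\Theta F(\Theta(\E),\N) \geq G_\OO(\E; R_{s,\OO'}(\N))$. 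The $\aff(\OO)$ lower bound is identical, replacing $R_{s,\OO'}$ by $R_{\max,\OO'}$ and the inequality constraint on $\<\idc\otimes\M(\psi^\star), W\>$ by an equality (and allowing $\Q\in\OO_{\rm all}'$), mirroring the second half of Theorem 1's proof.

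For the \textbf{upper bounds}, the idea is that \emph{any} free superchannel $\Theta\in\SS$ gives, via its measure-and-prepare structure / the data-processing inequality, a feasible point for the $G_\OO$ problem with the looser parameter $R_{\min,\OO'}(\N)$. Concretely, fix $\Theta\in\SS$ and write $\tilde\N = \Theta(\E)$. Using the Choi-representation minimax form from Lemma \ref{lem:minimax} and the identity $\psi = (\rho^{1/2}\otimes\id)\psi^+(\rho^{1/2}\otimes\id)$, I would take the input $\psi$ achieving $F(\Theta(\E),\N) = \min_\psi\<\idc\otimes\tilde\N(\psi), \idc\otimes\N(\psi)\>$, set $W = \idc\otimes\N(\psi)$ (which is an operator with $0\leq W\leq\id$ since $\idc\otimes\N(\psi)$ is a pure state), and check that $\<\idc\otimes\M(\psi), W\> = \<\idc\otimes\M(\psi), \idc\otimes\N(\psi)\> \leq R_{\min,\OO'}(\N)^{-1}$ for all $\M\in\OO'$ by the very definition of $R_{\min,\OO'}(\N)^{-1} = \min_{\psi'}\max_{\M\in\OO'}\<\idc\otimes\N(\psi'), \idc\otimes\M(\psi')\>$ (the pure-output form recalled in Appendix \ref{app:properties_measures}). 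The crux is then to transfer this feasible point back from $\Theta(\E)$ to $\E$: since $\Theta$ acts on $\idc\otimes\L$ conjugated by fixed pre- and post-processing channels, the quantity $\<\idc\otimes\N(\psi), \idc\otimes\Theta(\L)(\psi)\>$ is of the form $\<W', \idc\otimes\L(\psi')\>$ for some operator $W'$ with $0\leq W'\leq\id$ and some pure input $\psi'$ (absorbing the preprocessing into $\psi'$ and the adjoint of the postprocessing applied to $W$ into $W'$); crucially the same $W', \psi'$ work simultaneously for $\E$ and for every $\M\in\OO$, with the constraint $\<W', \idc\otimes\M(\psi')\> = \<\idc\otimes\N(\psi), \idc\otimes\Theta(\M)(\psi)\> \leq R_{\min,\OO'}(\N)^{-1}$ holding because $\Theta(\M)\in\OO'$. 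Hence $(\psi', W')$ is feasible for $G_\OO(\E; R_{\min,\OO'}(\N))$ with objective value $\<W', \idc\otimes\E(\psi')\> = \<\idc\otimes\N(\psi), \idc\otimes\Theta(\E)(\psi)\> = F(\Theta(\E),\N)$, so $G_\OO(\E; R_{\min,\OO'}(\N)) \geq F(\Theta(\E),\N)$ for all $\Theta$, giving the upper bound after a supremum. The $\aff(\OO)$ upper bound runs the same way, with $R_{\min,\aff(\OO')}$ and the equality constraint; here one uses that $\<\idc\otimes\N(\psi'),\idc\otimes\M(\psi')\>$ is constant over $\M\in\OO'$ for the optimal $\psi'$, from the $\aff$-version recalled in Appendix \ref{app:properties_measures}.

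The \textbf{main obstacle} I anticipate is the upper-bound transfer step: making rigorous that the "measure-and-prepare'' decomposition of a general superchannel $\Theta\in\SS$ (pre-processing $\omega\mapsto\omega\otimes a$, then $\idc\otimes\L$, then a post-processing channel) lets one pull the test operator $W=\idc\otimes\N(\psi)$ back through the post-processing via its Heisenberg adjoint and the pre-processing into a redefined pure input, \emph{uniformly} in the channel $\L\in\{\E\}\cup\OO$ being fed in. This is essentially the same linearity observation used in the data-processing argument in the proof of Theorem 3 ("any feasible solution for $R_H^\epsilon(\E\|\M)^{-1}$ provides a feasible solution for $R_H^\epsilon(\Theta(\E)\|\Theta(\M))^{-1}$ regardless of $\M$"), so I expect it to go through cleanly once the superchannel realization theorem of Chiribella et al.\ is invoked; the remaining steps are the routine concavity-of-$\sqrt{F}$ and data-processing-of-$F$ estimates already used earlier, plus bookkeeping of the ceiling/floor conventions, which I would relegate to a remark.
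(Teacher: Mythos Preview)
Your lower-bound argument and the overall architecture of the upper bound (pull back the test operator through the Chiribella realization $\Theta(\cdot)=\E_{\rm post}\circ(\idc\otimes\cdot)\circ\E_{\rm pre}$) are exactly what the paper does. There is, however, a genuine gap in your upper bound: your choice of the input state $\psi$ is wrong, and with it the feasibility constraint fails.

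You pick $\psi$ to be the minimizer in $F(\Theta(\E),\N)=\min_\psi\<\idc\otimes\Theta(\E)(\psi),\idc\otimes\N(\psi)\>$, and then assert that $\<\idc\otimes\M'(\psi),\idc\otimes\N(\psi)\>\leq R_{\min,\OO'}(\N)^{-1}$ for every $\M'\in\OO'$ ``by the very definition.'' But the definition is $R_{\min,\OO'}(\N)^{-1}=\min_{\psi'}\max_{\M'\in\OO'}\<\idc\otimes\N(\psi'),\idc\otimes\M'(\psi')\>$, a \emph{minimum} over $\psi'$; for an arbitrary $\psi$ you only get $\max_{\M'}\<\idc\otimes\N(\psi),\idc\otimes\M'(\psi)\>\geq R_{\min,\OO'}(\N)^{-1}$, the wrong direction. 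So the pulled-back pair $(W',\psi')$ need not satisfy $\<W',\idc\otimes\M(\psi')\>\leq R_{\min,\OO'}(\N)^{-1}$, and feasibility for $G_\OO(\E;R_{\min,\OO'}(\N))$ is not established. The same issue recurs in the $\aff(\OO)$ version, where you need the equality constraint to hold at the chosen $\psi$, not merely at the optimizer of $R_{\min,\aff(\OO')}$.

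The fix is simple and is precisely what the paper does: choose $\psi^\star$ to be the optimizer in the definition of $R_{\min,\OO'}(\N)^{-1}$ (so that $\max_{\M'\in\OO'}\<\idc\otimes\N(\psi^\star),\idc\otimes\M'(\psi^\star)\>=R_{\min,\OO'}(\N)^{-1}$ by construction), then set $W\coloneqq[\idc\otimes\E_{\rm post}^\dagger](\idc\otimes\N(\psi^\star))$ and use the input $\idc\otimes\E_{\rm pre}(\psi^\star)$. Feasibility now holds because $\Theta(\M)\in\OO'$ and $\psi^\star$ was chosen to make the bound tight; and since $F$ is a minimum over inputs, the objective value $\<\idc\otimes\N(\psi^\star),\idc\otimes\Theta(\E)(\psi^\star)\>$ still upper-bounds $F(\Theta(\E),\N)$. (A minor side remark: the pulled-back input $\idc\otimes\E_{\rm pre}(\psi^\star)$ need not be pure, so you should appeal to the Choi-based reformulation of $G_\OO$ over density matrices rather than insist on ``some pure input $\psi'$.'')
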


\begin{proof}
Let us consider $G_\OO$ first. Let $\Theta \in \SS$ be any free transformation, and $\psi^\star$ be a pure state such that $R_{\min, \OO'}(\N)^{-1} = \max_{\M \in \OO'} \< \idc \otimes \N (\psi^\star), \idc \otimes \M(\psi^\star) \>$.  Since $\Theta$ is a quantum superchannel, we can write it as $\Theta(\cdot) = \E_{\rm post} \circ \idc \otimes \cdot \circ \E_{\rm pre}$~\cite{chiribella_2008} for some CPTP maps $\E_{\rm pre}, \E_{\rm post}$. Define
\begin{equation}\begin{aligned}
  W \coloneqq [\idc \otimes \E^\dagger_{\rm post}] \left( \idc \otimes \N (\psi^\star) \right).
\end{aligned}\end{equation}
The complete positivity of $\E_{\rm post}$ immediately gives that $W \geq 0$, and coupled with the fact that $\E_{\rm post}$ preserves trace we get for any state $\sigma$ that
\begin{equation}\begin{aligned}
  \< W, \sigma \> &= \< \idc \otimes \N (\psi^\star), \idc \otimes \E_{\rm post} (\sigma) \> \leq 1
\end{aligned}\end{equation}
which means that $W \leq \id$. Finally, for any $\M \in \OO$, it holds that
\begin{equation}\begin{aligned}
  \< W, \idc \otimes \M \left(\idc \otimes \E_{\rm pre} (\psi^\star)\right) \> &= \< \idc \otimes \N (\psi^\star), \idc \otimes \Theta[\M] (\psi^\star) \>\\
  &\leq \max_{\M' \in \OO'} \< \idc \otimes \N (\psi^\star), \idc \otimes \M' (\psi^\star) \>\\
  &= R_{\min,\OO'}(\N)^{-1}.
\end{aligned}\end{equation}
Altogether, we have shown that $W$ and the state $\idc \otimes \E_{\rm pre} (\psi^\star)$ form a feasible solution to $G_\OO(\E; \ROmin(\N))$. Thus
\begin{equation}\begin{aligned}
  G_\OO(\E; \ROmin(\N)) &\geq \< W, \idc \otimes \E \left(\idc \otimes \E_{\rm pre} (\psi^\star)\right) \>\\
  &= \< \idc \otimes \N (\psi^\star), \idc \otimes \Theta[\E](\psi^\star) \>\\
  &\geq F(\Theta(\E), \N).
\end{aligned}\end{equation}
Since this holds for any $\Theta \in \SS$, the upper bound follows.

For the other inequality, take any feasible $W'$ and $\psi'$ in the optimization problem $G_\OO(\E; R_{s,\OO'}(\N))$, and define the superchannel
\begin{equation}\begin{aligned}
  \Theta(\L) \coloneqq \< W', \idc \otimes \L (\psi') \> \N + \< \id - W', \idc \otimes \L (\psi') \> \Q
\end{aligned}\end{equation}
where $\Q$ is a channel such that $\N  + (R_{s,\OO'}(\N)-1) \Q \in R_{s,\OO'}(\N) \OO'$. By an argument analogous to the one in the proof of Thm.~\ref{thm:transformation general}, we have that $\Theta \in \SS$ and get
\begin{equation}\begin{aligned}
  F(\Theta(\E), \N) &\geq \< W', \idc \otimes \E (\psi') \> \min_\psi\< \idc \otimes \N (\psi), \idc \otimes \N (\psi) \>\\
  &= \< W', \idc \otimes \E (\psi') \>.
\end{aligned}\end{equation}
Since this holds for any feasible $W'$ and $\psi'$, we get the desired inequality.

The proof for $G_{\aff(\OO)}$ proceeds analogously (cf.\ Thm.~\ref{thm:transformation general}).
\end{proof}

In the case of manipulating quantum states, Thm.~\ref{thm:fidelity_distillation} recovers the results of Ref.~\cite{regula_2020}.

\begin{remark}Our proofs can be straightforwardly adapted to a related notion of fidelity between channels often used in quantum communication theory, where the input state is fixed as the maximally entangled state $\psi^+$:
\begin{equation}\begin{aligned}
	\wt F(\E,\F) \coloneqq F( \idc \otimes \E (\psi^+), \idc \otimes \F (\psi^+) ).
\end{aligned}\end{equation}
If one is interested in approximate transformations with respect to this notion of channel distance, our main results can be recovered by replacing $R_{\min,\OO}$ with the quantity
\begin{equation}\begin{aligned}
	\wt R_{\min,\OO}(\E) &\coloneqq \min_{\M\in\OO} R_{\min}(\idc\otimes\E(\psi^+)||\idc\otimes\M(\psi^+)),
\end{aligned}\end{equation}
and analogously for $R_{\min,\aff(\OO)}$.
It is known that in settings such as quantum communication, the different notions of fidelity are asymptotically equivalent~\cite{kretschmann_2004}.
Interestingly, they can lead to the same quantitative statements already at the one-shot level: in the setting of no-signalling--assisted communication, we get from Thm.~\ref{thm:fidelity_distillation} that the maximal achievable fidelity is given by $G_{\OO_{\rm R}}(\E)$, and this exactly matches the fidelity computed in Ref.~\cite{Leung2015NS} using the alternative fidelity measure $\wt F$ (in fact, it is also equivalent to an analogous figure of merit in \textit{classical} communication with non-signalling assistance~\cite{Leung2015NS}).
\end{remark}


\section{Remarks on state-based resource theories}\label{app:state_based}

Some resource theories are concerned with the manipulation of channels in relation to some underlying state-based resource, characterized by a convex set of free states $\FF$, and the channels $\OO$ corresponds to some class of free operations for the state theory. In such cases, one can define monotones $R_{\max,\FF}, R_{\min,\FF}, R_{s,\FF}$ etc. in full analogy with the channel cases, but without the need to optimize over input states. There is then a natural choice of reference (target) channels: the preparation channel $\P_\phi$ with trivial input that simply prepares a chosen resourceful pure state $\phi$. Indeed, this is the way that entanglement~\cite{gour_2019,bauml_2019}, coherence~\cite{saxena_2020}, and more general channel resources based on states~\cite{liu_2020} are often approached.

Our results immediately apply to this setting. Because of the simpler structure of preparation channels, a number of simplifications will occur --- for instance, the quantification of the resourcefulness of $\P_\phi$ will reduce to monotones defined at the level of states.
\begin{lemma}\label{lem:replacement}
Let $\OO$ be any class of free operations such that $\P_\sigma \in \OO \; \forall \sigma \in \FF$. Then, for any preparation channel $\P_\phi$, it holds that
\begin{equation}\begin{aligned}
	\ROg(\P_\phi) &= \RFg(\phi)\\
	\ROs(\P_\phi)	&= \RFs(\phi)\\
	\ROmin(\P_\phi) &= \RFmin(\phi).
\end{aligned}\end{equation}
\end{lemma}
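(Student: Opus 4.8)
The plan is to reduce every channel-level quantity in the statement to its state-level counterpart by exploiting the fact that a preparation channel has trivial input. First I would record an elementary structural observation: the map $\rho \mapsto \P_\rho$ is affine and injective, $\sum_i \lambda_i \P_{\rho_i} = \P_{\sum_i \lambda_i \rho_i}$, and any channel $\M$ that can appear in a combination $\tfrac{\P_\phi + r\M}{1+r}$ lying in a fixed set of channels must itself have trivial input, so $\M = \P_\tau$ for a state $\tau$ and $\tfrac{\P_\phi + r\P_\tau}{1+r} = \P_{(\phi + r\tau)/(1+r)}$. Since all channels with trivial input are allowed here, the noise channels appearing in $R_{\max,\OO}(\P_\phi)$ are exactly the $\P_\tau$ with $\tau$ an arbitrary state, while those in $R_{s,\OO}(\P_\phi)$ are the $\P_\tau \in \OO$.

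Second, I would establish the identification $\{\rho : \P_\rho \in \OO\} = \FF$. The inclusion $\supseteq$ is exactly the hypothesis $\P_\sigma \in \OO \ \forall \sigma \in \FF$. For $\subseteq$, one uses that $\OO$ is a class of free operations for the underlying state theory, hence maps free states to free states; since the unique state on the trivial input space is free, the output $\rho$ of any $\P_\rho \in \OO$ lies in $\FF$. (One could equally take this as part of what it means for $\OO$ to be compatible with $\FF$.)

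With these two facts, each equality is obtained by unfolding the definitions. For $\ROg(\P_\phi) = \RFg(\phi)$: by the first step $\tfrac{\P_\phi + r\M}{1+r} = \P_{(\phi+r\tau)/(1+r)}$ for some state $\tau$, and by the second step $\P_{(\phi+r\tau)/(1+r)} \in \OO$ iff $(\phi+r\tau)/(1+r) \in \FF$, so the sets of feasible $1+r$ --- and hence the infima --- coincide. The proof of $\ROs(\P_\phi) = \RFs(\phi)$ is identical, except the noise $\P_\tau$ is constrained to $\P_\tau \in \OO$, i.e.\ to $\tau \in \FF$, which is precisely the constraint defining $\RFs(\phi)$. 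For $\ROmin(\P_\phi) = \RFmin(\phi)$: since the input and reference spaces of $\P_\phi$ are trivial, the optimization over input states in $R_{H,\OO}^{0}$ collapses and the support projector of the pure output is $\proj{\phi}$, so $\ROmin(\P_\phi)^{-1} = \max\{\<\proj{\phi},\sigma\> : \P_\sigma \in \OO\} = \max_{\sigma \in \FF}\<\proj{\phi},\sigma\> = \RFmin(\phi)^{-1}$, the middle step being the identification above.

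I expect no genuine obstacle; the only point requiring care is the identification $\{\rho : \P_\rho \in \OO\} = \FF$ in the second step, which relies on the consistency between the free channels $\OO$ and the free states $\FF$ and must be invoked explicitly rather than taken for granted.
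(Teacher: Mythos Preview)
Your proposal is correct and follows the same approach as the paper: the paper's proof is a one-sentence remark that the trivial input space of $\P_\phi$ forces every channel in the optimization to be a preparation channel $\P_\sigma$ with $\sigma \in \FF$, and you have simply spelled out the details of this observation carefully. The one point you flag as needing care --- the identification $\{\rho : \P_\rho \in \OO\} = \FF$ --- is left implicit in the paper but is indeed justified by the standing assumption that $\OO$ is a class of free operations for the underlying state theory.
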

The Lemma follows immediately since the trivial input space of $\P_\phi$ forces all channels in the optimization to also be preparation channels, which means that we are optimizing over $\P_\sigma \in \OO$ s.t.\ $\sigma \in \FF$. 
The result can also be extended to more general replacement channels~\cite{regula_2020-2}.

As an immediate consequence, from Cors.~\ref{cor:dilution} and \ref{cor:distillation} we obtain that the one-shot distillable resource $d_\OO^\epsilon$ and one-shot resource cost of any channel can be quantified exactly, when the reference states $\phi$ satisfy $\RFs(\phi) = \RFmin(\phi)$ or $\RFg(\phi) = R_{\min,\aff(\FF)}(\phi)$. Notably, in Refs.~\cite{liu_2019,regula_2020}, it was shown that in every convex resource theory of quantum states there exists a set of so-called golden states $\phig$, such that each $\phig$ in the set maximizes both $\RFg$ and $\RFmin$ among all states in the given space, and crucially that $\RFg (\phig) = \RFmin (\phig)$. In many relevant resource theories, such golden states satisfy $\RFs(\phig) = \RFmin(\phig)$ --- e.g.\ in bi- and multipartite entanglement~\cite{vidal_1999,steiner_2003,regula_2020}, entanglement of Schmidt number $k$~\cite{johnston_2018}, multi-level coherence~\cite{johnston_2018}, and non-positive partial transpose --- or $\RFg(\phig) = R_{\min,\aff(\FF)}(\phig)$ --- e.g.\ in coherence and thermodynamics, thus fulfilling the required conditions.

Two direct applications of our results are then in the theory of bipartite channel entanglement~\cite{gour_2019,bauml_2019}, where the aim is to distill copies of the maximally entangled state, or to simulate the action of a given channel by using maximally entangled states. Specifically, the distillable entanglement and entanglement cost can be defined as
\bal
d^\epsilon_E(\E)&\coloneqq\max\lset\log d\sbar \exists \Theta\in\SS, F(\Theta(\E),\phi^+_d)\geq 1-\epsilon\rset\\
c^\epsilon_E(\E)&\coloneqq\min\lset\log d\sbar \exists \Theta\in\SS, F(\Theta(\phi^+_d),\E)\geq 1-\epsilon\rset,
\eal
where $\phi^+_d$ is the $d$-dimensional maximally entangled state, and the allowed free transformations $\SS$ will depend on which class of channels we wish to preserve. When the free channels are all separable channels $\OO_{\rm SEP}$, Cors.~\ref{cor:dilution} and \ref{cor:distillation} establish exact expressions for the distillable entanglement and entanglement cost of a channel under separability-preserving superchannels:
\begin{equation}\begin{aligned}
	d^\epsilon_E(\E) = \log \lfloor R_{H,\OO_{\rm SEP}}^\epsilon(\E) \rfloor, \qquad
	c^\epsilon_E(\E) = \log \lceil R_{s,\OO_{\rm SEP}}^\epsilon(\E) \rceil.
\end{aligned}\end{equation}
Analogously, when the free channels are all PPT channels $\OO_{\rm PPT}$, we get $d^\epsilon_E(\E)=\log\lfloor R_{H,\OO_{\rm PPT}}^\epsilon(\E) \rfloor$ and $c^\epsilon_E(\E)=\log \lceil R_{s,\OO_{\rm PPT}}(\E) \rceil$. These results generalize our considerations in quantum communication theory to the bipartite channel setting, and can be compared with the results of Refs.~\cite{gour_2019,bauml_2019} where the manipulation of channel entanglement under so-called PPT superchannels was considered. We note that the corresponding results in the special case of manipulating the entanglement of states were first reported in~\cite{brandao_2010,brandao_2011}.


\section{Remarks on the theory of nonlocality}\label{app:nonlocal}

For a number of settings for quantum nonlocality, the monotones discussed in this work can be straightforwardly computed by exploiting the fact that $\OO$ can be represented by linear or semidefinite constraints~\cite{Rosset2020type}. 
We use this to evaluate the measures for a class of boxes which includes the important cases of the Popescu-Rohrlich (PR) box and the Tsirelson box.

\begin{figure}[h!]
\centering
\includegraphics[width=7cm]{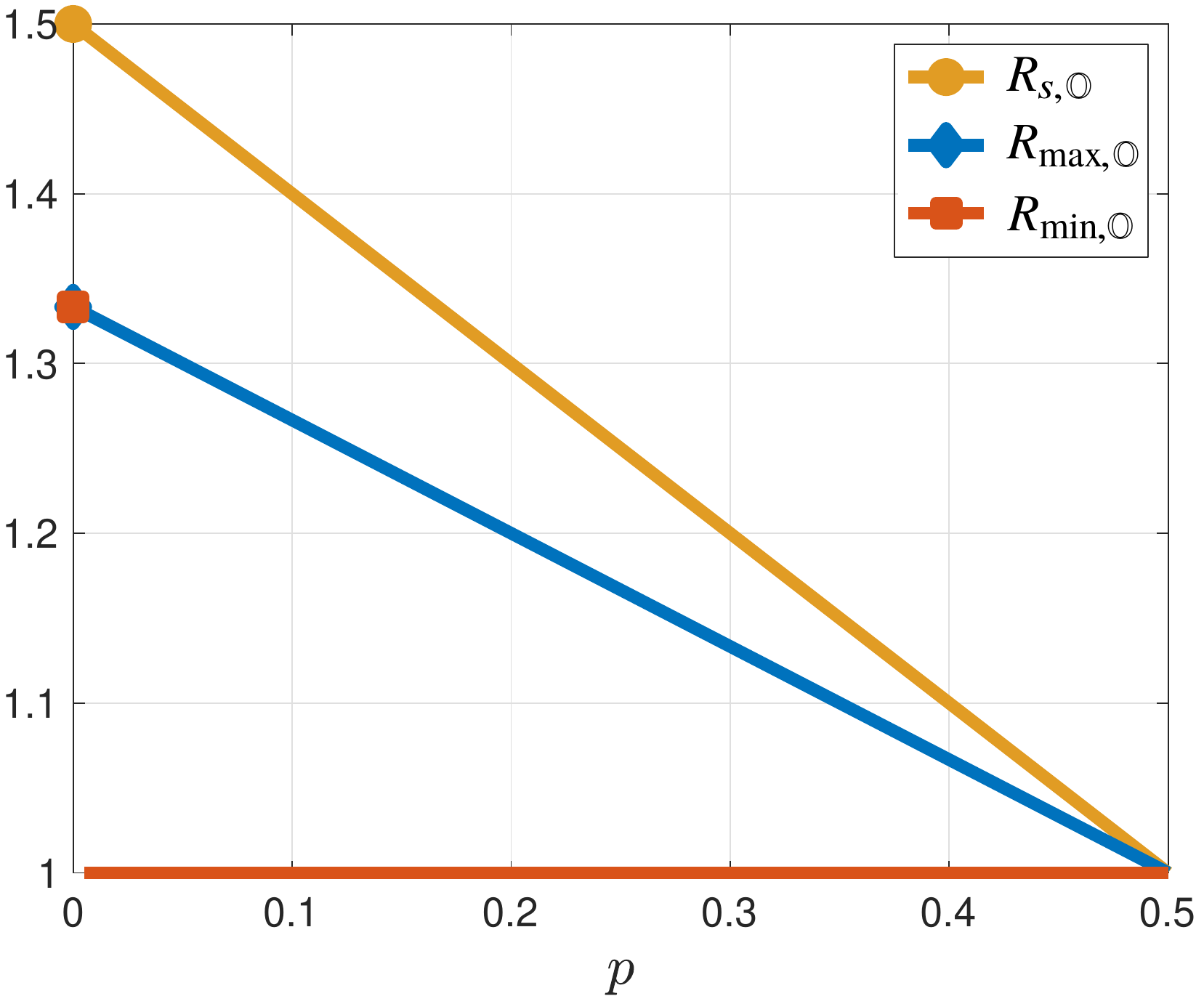}
\caption{Measures of nonlocality of the isotropic box $\B_p = (1-p) \B_{\rm PR} + p \B_{\rm random}$, where $\B_{\rm PR}$ is the PR box and $\B_{\rm random}$ denotes white noise. The Tsirelson box is obtained for $p = 1 - \frac{\sqrt{2}}{2} \approx 0.29$.}
\label{fig:nonlocal}
\end{figure}

We see in Fig.~\ref{fig:nonlocal} that the generalized robustness $\ROg$ linearly decreases from the value of $\ROg(\B_{\rm PR}) = \frac{4}{3}$, while $\ROs$ exhibits similar behavior from the value of $\ROs(\B_{\rm PR}) = \frac{3}{2}$. Notice in particular that $\ROg(\B_{\rm PR}) = \ROmin(\B_{\rm PR})$ (analogously to the property for golden states~\cite{liu_2019,regula_2020} in state-based theories), but the full-rank noise in the form of $\B_{\rm random}$ makes the min-entropy measure $\ROmin(\B_p)$ equal to 1 for all other boxes in this family. This means, in fact, that the transformation $\B_{p}^{\otimes n} \to \B_{\rm PR}$ with $p > 0$ is impossible for any $n$, since the Choi matrix $J_{\B_{p}^{\otimes n}}$ is always full rank and hence $\ROmin(\B_{p}^{\otimes n}) = 1$.


\section{Remarks on robustness measures}\label{app:robustness}

Due to their general formulation, robustness and other related measures have been considered in various settings such as nonlocality, contextuality, and measurement incompatibility, which impose specific structures on the resources of interest. 
We will show that the robustness measures defined based on the mixture of the original resources studied in previous works coincide with the channel-based robustness defined in our formalism.  
Here, we focus on the case of the generalized robustness of measurement incompatibility, but similar arguments can be applied to other types of resources as well as other variants of the measures such as the standard robustness and weight measure. 

Let $\mfM$ be the set of all sets of POVMs, and $\mfC$ be the set of compatible POVMs.
For any set of POVMs $\{M_{a|x}\}$, the robustness of measurement incompatibility was originally defined as~\cite{Haapasalo2015robustness,Uola2015robustness}
\bal
 R_\mfC(\{M_{a|x}\})\coloneqq\min\lset 1+r \sbar \frac{M_{a|x} + r N_{a|x}}{1+r}=F_{a|x}\,\forall a,x,\ \{F_{a|x}\}\in\mathfrak{C},\ \{N_{a|x}\}\in\mathfrak{M} \rset.
 \label{eq:def robustness measurement incompatibility original}
\eal

On the other hand, our formalism represents the sets of POVMs in the form of bipartite channels. 
Namely, the theory of measurement incompatibility can be formulated by taking $\OO_{\rm all}$ to be the set of channels representing the sets of POVMs as
\bal
 \OO_{\rm all}\coloneqq\lset\E_{\{M_{a|x}\}}\sbar\{M_{a|x}\}\in\mfM\rset
 \label{eq:POVMs to channel}
\eal
where we defined 
 \bal
 \E_{\{M_{a|x}\}}(\sigma\otimes \rho)\coloneqq \sum_{x,a} \bra{x}\sigma\ket{x} \<M_{a|x},\rho\>\dm{a}
 \eal
where $\{\ket{x}\}$ and $\{\ket{a}\}$ are orthonormal bases representing classical variables for measurement settings and measurement outcomes respectively. 
Then, the set of free channels $\OO\subseteq\OO_{\rm all}$ can naturally be defined using the set of compatible POVMs $\mfC$ in the place of $\mfM$ in \eqref{eq:POVMs to channel}. 

With this formalism, the robustness for channel $\E_{\{M_{a|x}\}}$ is given by \eqref{eq:robustness def}.
Below, we show the equivalence between this and the original robustness measure \eqref{eq:def robustness measurement incompatibility original}. 

\begin{lemma}
 Let $\OO_{\rm all}$ be the set of bipartite channels defined in \eqref{eq:POVMs to channel} and $\OO$ be the set of free channels that represent the compatible POVMs. Then, it holds that 
\bal
R_\mfC(\{M_{a|x}\})=R_{\OO}(\E_{\{M_{a|x}\}}).
\eal
\end{lemma}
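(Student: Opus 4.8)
The plan is to prove both inequalities $R_{\OO}(\E_{\{M_{a|x}\}})\le R_{\mfC}(\{M_{a|x}\})$ and $R_{\OO}(\E_{\{M_{a|x}\}})\ge R_{\mfC}(\{M_{a|x}\})$ by transporting feasible points of one robustness optimization into feasible points of the other. The whole argument rests on two elementary observations about the encoding $\Lambda:\{M_{a|x}\}\mapsto\E_{\{M_{a|x}\}}$: it is \emph{linear} in the tuple of operators, so that $\E_{\{M_{a|x}\}}+r\,\E_{\{N_{a|x}\}}=\E_{\{M_{a|x}+rN_{a|x}\}}$ and $(1+r)\,\E_{\{F_{a|x}\}}=\E_{\{(1+r)F_{a|x}\}}$; and it is \emph{injective}, since feeding in the product input $\dm{x}\otimes\rho$ gives $\bra{a}\E_{\{M_{a|x}\}}(\dm{x}\otimes\rho)\ket{a}=\<M_{a|x},\rho\>$ for every state $\rho$, which recovers each operator $M_{a|x}$. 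Combined with the defining identities $\OO_{\rm all}=\Lambda(\mfM)$ and $\OO=\Lambda(\mfC)$ from \eqref{eq:POVMs to channel}, these are all that is needed.

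For $R_{\OO}\le R_{\mfC}$: take any $1+r$ together with $\{N_{a|x}\}\in\mfM$ and $\{F_{a|x}\}\in\mfC$ witnessing feasibility in \eqref{eq:def robustness measurement incompatibility original}, i.e.\ $M_{a|x}+rN_{a|x}=(1+r)F_{a|x}$ for all $a,x$. Applying $\Lambda$ and using linearity yields $\E_{\{M_{a|x}\}}+r\,\M=(1+r)\,\F$ with $\M\coloneqq\E_{\{N_{a|x}\}}\in\OO_{\rm all}$ and $\F\coloneqq\E_{\{F_{a|x}\}}\in\OO$. This is a feasible point of value $1+r$ for the channel robustness \eqref{eq:robustness def} (with $\tilde\OO=\OO_{\rm all}$), so taking the infimum gives the claimed bound.

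For the reverse inequality, start from a feasible point of $R_{\OO}(\E_{\{M_{a|x}\}})$: a value $1+r$, a channel $\M\in\OO_{\rm all}$ and a channel $\F\in\OO$ with $\E_{\{M_{a|x}\}}+r\,\M=(1+r)\,\F$. Since $\OO_{\rm all}=\Lambda(\mfM)$ and $\OO=\Lambda(\mfC)$, write $\M=\E_{\{N_{a|x}\}}$ with $\{N_{a|x}\}\in\mfM$ and $\F=\E_{\{F_{a|x}\}}$ with $\{F_{a|x}\}\in\mfC$. The channel identity then reads $\E_{\{M_{a|x}+rN_{a|x}\}}=\E_{\{(1+r)F_{a|x}\}}$, and injectivity of $\Lambda$ forces $M_{a|x}+rN_{a|x}=(1+r)F_{a|x}$ for all $a,x$ — exactly a feasible point of value $1+r$ in \eqref{eq:def robustness measurement incompatibility original}. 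Hence $R_{\mfC}(\{M_{a|x}\})\le R_{\OO}(\E_{\{M_{a|x}\}})$, and combining the two bounds finishes the proof.

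The argument is essentially mechanical; the only point requiring a moment of care is that the ``noise'' channel $\M$ in a channel-level decomposition is \emph{a priori} just some element of $\OO_{\rm all}$, and it is precisely the definition $\OO_{\rm all}=\Lambda(\mfM)$ that guarantees it originates from a genuine set of POVMs $\{N_{a|x}\}$, so the pulled-back relation is admissible noise in \eqref{eq:def robustness measurement incompatibility original}. If either infimum is not attained, one runs the same transfer on $\varepsilon$-optimal points and lets $\varepsilon\to0$. The identical scheme works verbatim for the standard robustness (taking $\mfC$ as the noise set on both sides) and, with the obvious sign/normalization changes, for the weight measure.
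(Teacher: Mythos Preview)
Your proof is correct and takes essentially the same approach as the paper: both rest on the linearity and injectivity of the encoding $\{M_{a|x}\}\mapsto\E_{\{M_{a|x}\}}$ together with the identifications $\OO_{\rm all}=\Lambda(\mfM)$, $\OO=\Lambda(\mfC)$. The paper packages this as a single chain of equalities showing the two feasible sets coincide, while you split it into the two inequalities --- a purely cosmetic difference.
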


\begin{proof}
Following the definition of $R_{\OO}$,
\bal
 R_{\OO}(\E_{\{M_{a|x}\}})&=\min\lset 1+r\sbar\frac{\E_{\{M_{a|x}\}} + r \N}{1+r}=\F,\ \N\in\OO_{\rm all},\F\in\OO \rset\\
 & =\min\lset 1+r\sbar\frac{\E_{\{M_{a|x}\}} + r \E_{\{N_{a|x}\}}}{1+r}=\E_{\{F_{a|x}\}},\ \{N_{a|x}\}\in\mfM, \{F_{a|x}\}\in\mfC \rset\\
 & =\min\lset 1+r \sbar \forall \sigma,\rho,\sum_{x,a} \bra{x}\sigma\ket{x} \<\frac{M_{a|x}+rN_{a|x}}{1+r}-F_{a|x},\rho\>\dm{a} =0,\ \{N_{a|x}\}\in\mfM, \{F_{a|x}\}\in\mfC \rset\\
 &=\min\lset 1+r \sbar \frac{M_{a|x} + r N_{a|x}}{1+r}=F_{a|x}\,\forall a,x,\ \{N_{a|x}\}\in\mfM,\{F_{a|x}\}\in\mathfrak{C} \rset
\eal
where the last equality is obtained by observing that in the expression of the the third line, the fact that the equality holds for any $\sigma$, as well as that $\dm{a}$ is a classical state, implies that $\<\frac{M_{a|x}+rN_{a|x}}{1+r}-F_{a|x},\rho\>=0,\forall a,x$, and further imposing that this holds for any $\rho$ gives $\frac{M_{a|x}+rN_{a|x}}{1+r}-F_{a|x}=0,\forall a,x$.
On the other hand, $\frac{M_{a|x}+rN_{a|x}}{1+r}-F_{a|x}=0,\forall a,x$ clearly implies the expression in the third line. 
The proof is completed by noting that the last expression is nothing but the one for $R_\mfC(\{M_{a|x}\})$.
\end{proof}

\end{document}